\theoremstyle{definition}
\newtheorem{example}{Example}
\def\be{\begin{equation}}
\def\ee{\end{equation}}
\def\bea{\begin{eqnarray}}
\def\eea{\end{eqnarray}}
\def\ben{\begin{equation*}}
\def\een{\end{equation*}}
\def\bean{\begin{eqnarray*}}
\def\eean{\end{eqnarray*}}
\def\bma{\begin{mathletters}}
\def\ema{\end{mathletters}}
\def\bi{\begin{itemize}}
\def\ei{\end{itemize}}
\newtheorem{thm}{Theorem}
\newtheorem{cor}{Corollary}[thm]
\newcommand{\pk}[3]{\langle #1 \,  | \, #2 \, | #3  \rangle}
\newcommand{\C}[1]{\mathbb{C}^{#1}\otimes\mathbb{C}^{#1}}
\newcommand{\proj}[1]{\ket{#1}\bra{#1}}
\newcommand{\ket}[1]{ | \, #1  \rangle}
\newcommand{\bra}[1]{ \langle #1 \,  |}
\newcommand{\ketbra}[2]{|{#1}\rangle \langle {#2}|}
\newcommand{\braket}[2]{\langle #1 | #2 \rangle}
\newcommand{\h}{\mathcal{H}_{\s{A}}\otimes\mathcal{H}_{\s{B}}}
\newcommand{\ha}{\mathcal{H}_{A'}\otimes\mathcal{H}_{\s{B}}}
\newcommand{\s}[1]{\scriptscriptstyle #1}
\newcommand{\Ss}{\mathcal{S}}
\newcommand{\T}{\mathcal{T}}
\newcommand{\Tb}{\mathcal{T}_\bot}
\newcommand{\W}{W_\mathbf{i}}
\newcommand{\Hh}{H_\mathbf{i}}
\newcommand{\A}{A_\mathbf{i}}
\newcommand\un[1]{\underaccent{\vec}{#1}}
\newcommand\G[1]{\mathcal{G}(#1)}
\newcommand\rr[1]{{\rho^{\s{(#1)}}_{\s{AB}}}}
\newcommand\CC{\mathcal{C}}
\newcommand\ma[1]{\mathfrak{u}{(#1)}}
\newcommand\Ee[2]{E^{\s{(#1)}}_{\s{#2}}}
\newcommand\EE[1]{E^{\s{(#1)}}}
\begin{document}

\title{Framework for distinguishability of orthogonal bipartite states by one-way local operations and classical communication}
\author{Tanmay Singal}
\email{stanmay@imsc.res.in}
\affiliation{Optics \& Quantum Information Group, The Institute of Mathematical Sciences, CIT Campus, Taramani, Chennai, 600 113, India}

\begin{abstract}
In the topic of perfect local distinguishability of orthogonal multipartite quantum states, most results obtained so far pertain to bipartite systems whose subsystems are of specific dimensions. In contrast very few results for bipartite systems whose subsystems are of arbitrary dimensions, are known. This is because a rich variety of (algebraic or geometric) structure is exhibited by different sets of orthogonal states owing to which it is difficult to associate some common property underlying them all, i.e., a common property that would play a crucial role in the local distinguishability of these states. In this paper, I propose a framework for the distinguishability by one-way LOCC ($1$-LOCC) of sets of orthogonal bipartite states in a $d_A \otimes d_B$ bipartite system, where $d_A, d_B$ are the dimensions of both subsytems, labelled as $A$ and $B$. I show that if the $i$-th party (where $i=A,B$) can initiate a $1$-LOCC protocol to perfectly distinguish among a set of orthogonal bipartite 
states, then the information of the existence of such a $1$-LOCC protocol lies in a subspace of $d_i \times d_i$ hermitian matrices, denoted by  $\Tb^{(i)}$, and that the method to extract this information (of the existence of this $1$-LOCC protocol) from $\Tb^{(i)}$ depends on the value of $dim \Tb^{(i)}$. In this way one can give sweeping results for the $1$-LOCC (in)distinguishability of all sets of orthogonal bipartite states corresponding to certain values of $dim \Tb^{(i)}$. Thus I propose that the value of $dim \Tb^{(i)}$ gives the common underlying property based on which sweeping results for the $1$-LOCC (in)distinguishability of orthogonal bipartite quantum states can be made. 
\end{abstract}

\keywords{local distinguishability, one-way LOCC}

\pacs{03.67.Hk, 03.67.Mn}

\maketitle

\label{intro}

\textbf{Introduction:} The scenario in local distinguishability of bipartite orthogonal quantum states is as follows: Alice and Bob are given one of many possible orthogonal bipartite states and they have to figure out which one they've been given using only local operations and classical communication (LOCC). Some prominent results which apply to joint systems, whose subsystems are of arbitrary dimension, are Bennet et al's result \cite{B99}, which established that members from an unextendible product basis cannot be perfectly distinguished by LOCC, Walgate et al's result \cite{W00}, which establishes that any two multipartite orthogonal quantum states can be perfectly distinguished using only LOCC, Badziag et al's  \cite{B03} result, which 
obtained a Holevo-like upper bound for the locally accessible information for an ensemble of states from a bipartite system, and Cohen's result \cite{C08}, which established that almost all sets of $d+1$ orthogonal states from $N$ $d$-dimensional multipartite systems are not perfectly distinguishable by LOCC. Very few such generic results are known. In this paper I propose a framework for the one-way LOCC distinguishability of orthogonal bipartite states, and this proposition is made as an aforementioned generic result.

\textbf{Framework:} Inspired by work done in \cite{Y07} and \cite{N13}, I show that for a given set of orthogonal bipartite mixed states from a $d_A \otimes d_B$ bipartite system, the $i$-th party (where $i=A,B$) can be associated with a subspace of $d_i \times d_i$ hermitian matrices, $\Tb^{(i)}$ (defined after equation \eqref{T}) which contains all information of one-way LOCC ($1$-LOCC) protocols which this $i$-th party can initiate to perfectly distinguish among said given set of orthogonal bipartite states. In this paper I obtain results to extract this $1$-LOCC related information from $\Tb^{(i)}$. 

For simplifying notation, I make two assumptions, which won't reduce the generality of results obtained: (1) Alice always initiates the protocol. This allows for simplifying the notation: $\Tb^{(A)} \longrightarrow \Tb$. Note that to establish distinguishability by $1$-LOCC ($1$-LOCC distinguishability), one \emph{has} to extract relevant information from $\Tb^{(A)}$ (for Alice starting protocol) and/or $\Tb^{(B)}$ (for Bob starting protocol), separately. (2) If $d_A < d_B$, one can always extend Alice's subsystem $A$ to a larger local system $A'$ so that $d_{A'}=d_B$. Similarly, vice versa. Thus, there's no loss of generality in assuming that $d_A = d_B = d$.

Let Alice and Bob have $d$ dimensional quantum systems whose Hilbert spaces are denoted by $\mathcal{H}_{\s{A}}$ and $\mathcal{H}_{\s{B}}$ respectively. Let them share one of $n$ orthogonal bipartite states, whose density matrices $\rr{1},\rr{2},\cdots, \rr{n}$ are observables on $\mathcal{H}_{\s{A}}\otimes \mathcal{H}_{\s{B}}$. They wish to establish which state they share using a $1$-LOCC protocol which Alice commences. Let spectral decomposition of $\rr{i}$ be

\begin{equation}
\label{rho}
\rr{i} = \sum_{j=1}^{r_i} \lambda_{ij} \ketbra{\psi_{ij}}{\psi_{ij}},
\end{equation}

where $r_i = rank \left(\rr{i}\right)$, $\{ \lambda_{ij} \}_{j=1}^{r_i}$ are non-zero eigenvalues of $\rr{i}$ and $\braket{\psi_{ij}}{\psi_{i'j'}}= \delta_{ii'}\delta_{jj'}$, $\forall$ $1 \leq i \leq i' \leq n$, $1 \leq j \leq r_{i}$ and $ 1 \leq j' \leq r_{i'}$. Let $\{ \ket{s_{\s{j}}}_{\s{A}} \}_{j=1}^{d}$ and $\{ \ket{s_{\s{j}}}_{\s{B}} \}_{j=1}^{d}$ be standard orthonormal bases (ONB) for $\mathcal{H}_{\s{A}}$ and $\mathcal{H}_{\s{B}}$ respectively. For any $ 1 \leq i \leq n$, and any $1 \leq j \leq r_i$, define $d \times d$ complex matrices $W_{ij}$ by expanding $\ket{\psi_{ij}}_{\s{AB}}$ in the basis $\{\ket{s_j}_{\s{A}}\ket{s_k}_{\s{B}}\}_{j,k=1}^{d}$

\begin{equation}
\label{Wi}
\ket{\psi_{ij}}_{\s{AB}}= \sum_{k,l=1}^{d} \left( W_{ij} \right)_{kl}  \ket{s_{\s{l}}}_{\s{A}}\ket{s_{\s{k}}}_{\s{B}}.
\end{equation}

Thus $\braket{\psi_{ij}}{\psi_{i'j'}} =Tr(W_{ij}^\dag W_{i'j'}) =\delta_{\s{ii'}}\delta{\s{jj'}},$ $\forall$ $1 \leq i \le i' \leq n$, $1 \leq j \leq r_{i}$ and $ 1 \leq j' \leq r_{i'}$. Define the index set $\mathcal{I}$ $\equiv$ $\{ (i, i',j,j'),$ $ | $ $1 \leq i < i' \leq n,$ $1 \leq j \leq r_{i}$, $1 \leq j' \leq r_{i'}\}$. Let $\mathbf{i} = (i, i',j,j') \in \mathcal{I}$. Define $W_{\mathbf{i}} \equiv {W_{ij}}^\dag W_{i'j'}$. Then $W_{\mathbf{i}}$'s are $d \times d$ complex matrices with trace zero. Let $H_{\mathbf{i}}$ $\equiv$ $\frac{1}{2} \left(W_{\mathbf{i}} + \left(W_{\mathbf{i}}\right)^\dag\right)$ and $A_{\mathbf{i}}$ $\equiv$ $\frac{1}{2i} \left(W_{\mathbf{i}} - \left(W_{\mathbf{i}}\right)^\dag\right)$. Let $\Ss$ be the real vector space of all $d \times d$ hermitian matrices. $dim \Ss = 
d^2$. Let $\T$ be a subspace of $\Ss$, defined by
\begin{equation}
\label{T}
\T \equiv \left\{ \sum_{\s{ \mathbf{i}} \in \mathcal{I}} a_{\s{\mathbf{i}}} H_{\s{\mathbf{i}}} + b_{\s{\mathbf{i}}} A_{\s{\mathbf{i}}}, \; \forall \;  a_{\s{\mathbf{i}}}, b_{\s{\mathbf{i}}} \in \mathbb{R}   \right\}.
\end{equation} 

Let $\Tb$ be the orthogonal complement of $\T$ in $\Ss$. Note that $\mathbb{1}_d$ $\in$ $\Tb$, where $\mathbb{1}_d$ is the $d \times d$ identity matrix. 

Now consider theorem \ref{Nat}.

\begin{thm}[Nathanson \cite{N13}]
 \label{Nat}
Alice can commence a $1$-LOCC protocol to distinguish among $\rr{1},\rr{2},\cdots,\rr{n}$ if and only if an orthogonality preserving (OP) rank-one POVM exists on her side to start protocol with.
\end{thm}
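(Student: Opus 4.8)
The plan is to argue operationally. Theorem \ref{Nat} is really a statement about the possible \emph{shape} of a one-way protocol, so the strategy is: (i) reduce an arbitrary $1$-LOCC protocol that Alice initiates to a single POVM on $\mathcal{H}_{\s{A}}$ followed by one measurement of Bob on $\mathcal{H}_{\s{B}}$; (ii) read off what ``perfect discrimination'' forces on that POVM; (iii) refine the POVM to a rank-one one without destroying either its validity or its success; (iv) verify the reverse implication directly.

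For (i) and (ii): since all classical communication flows from Alice to Bob, Alice's side of the protocol is a sequence of local instruments on $\mathcal{H}_{\s{A}}$ adapted only to her own prior outcomes, and composing them yields a single instrument; only its classical outcome $a$ is forwarded, and since Bob then acts solely on $\mathcal{H}_{\s{B}}$, the only relevant object is the unnormalised conditional state $\operatorname{Tr}_{\s{A}}\big[(E_a\otimes\mathbb{1})\,\rr{i}\big]$ on Bob's side, where $\{E_a\}$ is the associated POVM. A family of positive operators is perfectly discriminable by a single measurement exactly when their supports are pairwise orthogonal, so the protocol succeeds iff, for every $a$, the operators $\operatorname{Tr}_{\s{A}}\big[(E_a\otimes\mathbb{1})\,\rr{i}\big]$ (over those $i$ for which they are nonzero) are mutually orthogonal — which is precisely the statement that $\{E_a\}$ is orthogonality preserving.

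For (iii): eigen-decompose each element, $E_a=\sum_k \nu_{ak}\,\ketbra{f_{ak}}{f_{ak}}$ with $\nu_{ak}>0$ and $\{\ket{f_{ak}}\}_k$ orthonormal, and replace $\{E_a\}$ by the rank-one POVM $\{\nu_{ak}\ketbra{f_{ak}}{f_{ak}}\}_{a,k}$, which still sums to $\mathbb{1}_d$. The key identity is
\begin{equation*}
\operatorname{Tr}_{\s{A}}\big[(E_a\otimes\mathbb{1})\,\rr{i}\big]=\sum_k \nu_{ak}\,\bra{f_{ak}}\,\rr{i}\,\ket{f_{ak}},
\end{equation*}
a sum of positive operators on $\mathcal{H}_{\s{B}}$ (each $\bra{f_{ak}}\rr{i}\ket{f_{ak}}$, read as a partial inner product, is positive since $\rr{i}\geq 0$). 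Because $\operatorname{Tr}(PQ)\geq 0$ for positive $P,Q$ with equality iff $PQ=0$, orthogonality of the two sums for $i\neq i'$ forces every cross term to vanish, in particular $\bra{f_{ak}}\rr{i}\ket{f_{ak}}\perp\bra{f_{ak}}\rr{i'}\ket{f_{ak}}$ for all $k$. Finally, for a rank-one Kraus operator $M\propto\ketbra{u}{f}$ the post-measurement global state $(M\otimes\mathbb{1})\rr{i}(M^\dagger\otimes\mathbb{1})$ equals, up to a positive scalar, $\ketbra{u}{u}\otimes\bra{f}\rr{i}\ket{f}$; these share the $A$-factor $\ketbra{u}{u}$, so they are mutually orthogonal exactly when the $\mathcal{H}_{\s{B}}$-operators $\bra{f}\rr{i}\ket{f}$ are. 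Hence the displayed orthogonality shows the refined rank-one POVM is orthogonality preserving, which is the ``only if'' direction.

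For (iv), the converse: given an OP rank-one POVM $\{E_a=\mu_a\ketbra{\phi_a}{\phi_a}\}$, the same product-state remark shows that orthogonality preservation is equivalent to the conditional states $\bra{\phi_a}\rr{i}\ket{\phi_a}$ on $\mathcal{H}_{\s{B}}$ being mutually orthogonal for each $a$; so Alice performs $\{E_a\}$, announces $a$, and Bob discriminates these orthogonal-support states with a measurement on $\mathcal{H}_{\s{B}}$ — a legitimate $1$-LOCC protocol initiated by Alice. I expect the main obstacle to be step (iii): one must be careful that passing to the eigenbasis of each $E_a$ preserves both the POVM property and the success of the protocol, and the clean route to the latter is exactly the ``sum of positives'' argument above, together with the observation that for rank-one Kraus operators orthogonality of the joint post-measurement states and of Bob's conditional states is one and the same condition.
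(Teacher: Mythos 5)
Your proposal is correct, but note that the paper never proves this statement: Theorem~\ref{Nat} is imported verbatim from Nathanson \cite{N13} and is used as a black box, so there is no in-paper proof to compare against. Your operational derivation is a sound, self-contained substitute and is essentially the standard argument. The two load-bearing steps are exactly the ones you identify: (a) since no information flows from Bob to Alice, all of Alice's adaptive local instruments compose into a single POVM $\{E_a\}$, and success of the protocol is equivalent to orthogonality of Bob's unnormalised conditionals $\operatorname{Tr}_{\s{A}}\bigl[(E_a\otimes\mathbb{1})\rr{i}\bigr]$ (the ``orthogonal supports iff perfectly distinguishable'' fact); and (b) the refinement to a rank-one POVM stays orthogonality preserving because each conditional is a sum of positive operators $\nu_{ak}\pk{f_{ak}}{\rr{i}}{f_{ak}}$, and $\operatorname{Tr}(PQ)\ge 0$ with equality iff $PQ=0$ kills every cross term. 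The same positivity argument also disposes of a small point you leave implicit: if Alice forwards only a coarse-graining of her outcome, orthogonality of the coarse-grained sums already forces orthogonality term by term, so assuming she announces the full outcome is without loss of generality. Your closing observation that for a rank-one Kraus operator $\kb{u}{f}$ the joint post-measurement state factorises as $\proj{u}\otimes\pk{f}{\rr{i}}{f}$, making ``joint states orthogonal'' and ``Bob's conditionals orthogonal'' the same condition, is precisely the mechanism the paper itself exploits immediately afterwards in the proof of Theorem~\ref{thm1} (equations \eqref{OP}--\eqref{OPeffect}), so your proof dovetails with how the result is used downstream.
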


The set of POVMs acting on a quantum system $A$ (or $B$) is convex, and a rank-one POVM $\{ \proj{\tilde{l}} \}_{l=1}^{m}$ in that set isn't necessarily extremal (check supplemental material \cite{suppstructure} for more information). $\sum_{l=1}^{n} \proj{\tilde{l}} = \mathbb{1}_{\s{A}}$, where $\mathbb{1}_{\s{A}}$ is the identity operator on $\mathcal{H}_{\s{A}}$. Let $\{ \proj{\tilde{l}} \}_{l=1}^{m}$ have a  convex decomposition into two distinct extremal rank-one POVMs: $\proj{\tilde{l}} = p \proj{\tilde{l'}} + (1-p)\proj{\tilde{l''}}$, $\forall \; 1 \le l \le m$, and where $ p \in (0,1)$, $\sum_{l=1}^{m} \proj{\tilde{l'}} = \sum_{l=1}^{m} \proj{\tilde{l''}} = \mathbb{1}_{\s{A}}$. This is possible if and only if $\ket{\tilde{l'}}_{\s{A}}$ and $\ket{\tilde{l''}}_{\s{A}}$ are both scalar multiples of $\ket{\tilde{l}}_{\s{A}}$. Thus if $\{ \proj{\tilde{l}} \}_{l=1}^{m}$ is OP, then so are $\{ \proj{\tilde{l'}} \}_{l=1}^{m}$ and $\{ \proj{\tilde{l''}} \}_{l=1}^{m}$. Theorem \ref{Nat} implies that since $\{ \
proj{\tilde{l'}} \}_{l=1}^{m}$ (and $\{ \proj{\tilde{l''}} \}_{l=1}^{m}$) is OP, Alice can commence protocol with $\{ \proj{\tilde{l'}
} \}_{l=1}^{m}$ (or $\{ \proj{\tilde{l''}} \}_{l=1}^{m}$). Thus, \emph{if $\rr{1},\rr{2},\cdots,\rr{n}$ are $1$-LOCC distinguishable, Alice can always choose her starting measurement to be an extremal rank-one POVM}.

\begin{thm}
\label{thm1}
$\rr{1},\rr{2},\cdots,\rr{n}$ are $1$-LOCC distinguishable if and only if $\Tb$ contains all elements of an extremal rank-one POVM. 
\end{thm}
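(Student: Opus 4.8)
\emph{The plan.} The plan is to reduce Theorem \ref{thm1} to a ``dictionary'' between orthogonality-preservation and membership in $\Tb$, and then read it off. By Theorem \ref{Nat} together with the observation made just before the present statement, $\rr{1},\ldots,\rr{n}$ are $1$-LOCC distinguishable if and only if Alice has an \emph{extremal} rank-one POVM on $\mathcal{H}_{\s{A}}$ that is orthogonality-preserving (OP). So I would prove: a rank-one POVM $\{\proj{\tilde l}\}_{l=1}^{m}$ on $\mathcal{H}_{\s{A}}$ is OP if and only if each operator $\proj{\tilde l^{*}}$ lies in $\Tb$, where $\ket{\tilde l^{*}}_{\s{A}}$ denotes the vector obtained from $\ket{\tilde l}_{\s{A}}$ by conjugating its components in $\{\ket{s_j}_{\s{A}}\}_{j=1}^{d}$. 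Granting this, I would finish by noting that the componentwise conjugation $X\mapsto\overline{X}$ is an $\mathbb{R}$-linear involution of $\Ss$ preserving positivity, trace, rank and $\mathbb{1}_d$, hence mapping extremal rank-one POVMs bijectively onto extremal rank-one POVMs: therefore an extremal OP rank-one POVM exists iff an extremal rank-one POVM contained in $\Tb$ exists, which is exactly Theorem \ref{thm1}.

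\emph{Step 1: unpack OP.} First I would write out what OP means in coordinates. If Alice measures $\{\proj{\tilde l}\}_{l}$ and gets outcome $l$, the (unnormalised) state on $\mathcal{H}_{\s{B}}$ into which $\rr{i}$ is steered is $\sum_{j=1}^{r_i}\lambda_{\s{ij}}\left(\bra{\tilde l}_{\s{A}}\otimes\mathbb{1}_{\s{B}}\right)\proj{\psi_{ij}}\left(\ket{\tilde l}_{\s{A}}\otimes\mathbb{1}_{\s{B}}\right)$. Since every $\lambda_{\s{ij}}>0$, the states steered from $\rr{i}$ and $\rr{i'}$ have orthogonal supports exactly when $\bra{\psi_{ij}}\left(\proj{\tilde l}\otimes\mathbb{1}_{\s{B}}\right)\ket{\psi_{i'j'}}=0$ for all $1\le j\le r_i$, $1\le j'\le r_{i'}$, i.e. for every $\mathbf{i}=(i,i',j,j')\in\mathcal{I}$. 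Substituting \eqref{Wi} and summing out Bob's index collapses this matrix element to $Tr\!\left(W_{\mathbf{i}}\,\proj{\tilde l^{*}}\right)$; so the POVM is OP iff $Tr\!\left(W_{\mathbf{i}}\,\proj{\tilde l^{*}}\right)=0$ for all $\mathbf{i}\in\mathcal{I}$ and all $l$.

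\emph{Step 2: bilinear $\to$ linear.} Then I would split into Hermitian parts: write $W_{\mathbf{i}}=H_{\mathbf{i}}+iA_{\mathbf{i}}$; since $\proj{\tilde l^{*}}$, $H_{\mathbf{i}}$, $A_{\mathbf{i}}$ are all Hermitian, $Tr(H_{\mathbf{i}}\proj{\tilde l^{*}})$ and $Tr(A_{\mathbf{i}}\proj{\tilde l^{*}})$ are real, so $Tr(W_{\mathbf{i}}\proj{\tilde l^{*}})=0$ iff both of them vanish. As $\{H_{\mathbf{i}},A_{\mathbf{i}}\}_{\mathbf{i}\in\mathcal{I}}$ spans $\T$ by \eqref{T}, outcome $l$ obeys the OP condition iff $\proj{\tilde l^{*}}$ is Hilbert--Schmidt orthogonal to $\T$, i.e. iff $\proj{\tilde l^{*}}\in\Tb$. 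This is the dictionary, and combined with the reduction above it yields Theorem \ref{thm1}. For the ``if'' half I would also spell out why an OP starting measurement suffices at all (this is the substance of Theorem \ref{Nat}, which I am invoking): once Alice announces $l$, Bob holds a finite family of pairwise orthogonal states on $\mathcal{H}_{\s{B}}$ --- one per $i$ with nonzero outcome probability --- which he discriminates perfectly with one measurement, completing the $1$-LOCC protocol.

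\emph{Main obstacle.} The main obstacle will be the bookkeeping in Step 1: performing the contraction in \eqref{Wi} carefully enough that Bob's summation index really does cancel and that the bilinear form lands as $Tr(W_{\mathbf{i}}\,\proj{\tilde l^{*}})$ --- crucially with the conjugated vector $\ket{\tilde l^{*}}$ rather than $\ket{\tilde l}$, since it is this stray conjugation that necessitates the (otherwise trivial) symmetry step at the end. Conceptually there is little to worry about: strict positivity of the $\lambda_{\s{ij}}$ is what lets ``orthogonal post-measurement supports'' be recast as a finite system of bilinear equations, and taking real and imaginary parts turns these into the linear trace constraints that cut out $\Tb$.
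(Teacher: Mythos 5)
Your proposal is correct and follows essentially the same route as the paper's proof: reduce via Theorem \ref{Nat} (plus the extremality observation) to the existence of an extremal OP rank-one POVM, unpack the OP condition on matrix elements using \eqref{Wi} to get $\pk{\tilde l^*}{\W}{\tilde l^*}=Tr\left(W_{\mathbf i}\proj{\tilde l^*}\right)=0$, and split into the Hermitian/anti-Hermitian parts $\Hh,\A$ to conclude $\proj{\tilde l^*}\in\Tb$, with the converse obtained by running the same dictionary backwards. Your only additions --- making the two-way equivalence explicit and checking that componentwise conjugation preserves extremality of rank-one POVMs --- are points the paper leaves implicit in its ``trace backwards'' IF step, so they are a welcome tightening rather than a different argument.
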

\begin{proof}
\textbf{ONLY IF:} Assume that $\rr{i}$'s are $1$-LOCC distinguishable. Thus there exists an OP extremal rank-one POVM $\{\proj{\tilde{l}}\}_{l=1}^{m}$ on Alice's side. Let Kraus operators of this measurement be $\{ \ketbra{\phi_l}{\tilde{l}} \}_{l=1}^{m}$, where $\ket{\phi_l}_{\s{A}}$ are normalized. If the measurement outcome is $k$, the (unnormalized) $i$-th post-measurement state is $\left( \ketbra{\phi_k}{\tilde{k}} \otimes \mathbb{1}_{\s{B}} \right)$ $\rr{i}$ $\left( \ketbra{\tilde{k}}{\phi_k} \otimes \mathbb{1}_{\s{B}} \right)$, where $\mathbb{1}_{\s{B}}$ is the identity operator acting on $\mathcal{H}_{\s{B}}$. Since the $k$-th POVM element is OP, we get the following equations for all $ 1 \leq i < i' \leq n$, $1 \leq j \leq r_{i}$ and $1 \le j' \le r_{i'}$.
\small
\begin{align}
\label{OP}
& Tr\left( \left( \proj{\tilde{k}} \otimes \mathbb{1}_{\s{B}} \right) \rr{i} \left( \proj{\tilde{k}} \otimes \mathbb{1}_{\s{B}} \right) \rr{i'} \right) = 0,  \\
\Longrightarrow \; \;&  \left(\rr{i}\right)^\frac{1}{2} \left( \proj{\tilde{k}} \otimes \mathbb{1}_{\s{B}} \right) \left( \rr{i'}\right)^\frac{1}{2} = 0, \notag \\
\label{OPeffect}
\Longrightarrow \; \; & \prescript{}{\s{AB}}{\pk{\psi_{ij}}{\left(\proj{\tilde{k}}\otimes \mathbb{1}_{\s{B}}\right)}{\psi_{i'j'}}_{\s{AB}}} =0.
\end{align}\normalsize 

Substituting expressions for $\ket{\psi_{ij}}_{\s{AB}}$ from equation \eqref{Wi} in equation \eqref{OPeffect} we get
\begin{align}
\label{in1}
 \sum_{b,b'=1}^{m} & \braket{s_b}{\tilde{k}} \left(W_\mathbf{i}\right)_{bb'}\braket{\tilde{k}}{s_{b'}} =0.
\end{align}

Since $\{ \proj{\tilde{l}} \}_{l=1}^{m}$ is a POVM, there exists an $m \times d$ isometry matrix $U$ such that $ \ket{\tilde{l}}_{\s{A}}=\sum_{l'=1}^{d} U_{l l'}\ket{s_{l'}}_{\s{A}}$. Using $U$, define the following $m$ vectors in $\mathbb{C}^d$: $ \ket{\tilde{l}^*} \equiv \left(U^{\s{*}}_{\s{l1}},U^{\s{*}}_{\s{l2}},\cdots,U^{\s{*}}_{\s{ld}}\right)^{\s{T}}$. Then $ \braket{\tilde{k}}{s_{b'}} = U^{\s{*}}_{\s{kb'}} $. Using this in equation \eqref{in1} implies that $ \pk{\tilde{k}^*}{\W}{\tilde{k}^*} = 0$ which implies that $\pk{\tilde{k}^*}{\Hh}{\tilde{k}^*}= \pk{\tilde{k}^*}{\A}{\tilde{k}^*} =0, \; \forall \; \mathbf{i} \in \mathcal{I}$. Thus $\proj{\tilde{k}^*} \in \Tb$. Similarly, $\{ \proj{\tilde{l}^*}\}_{l=1}^{m}$ is an extremal rank-one POVM contained in $\mathcal{T}_\bot$. \textbf{IF} Let $\{ \proj{\tilde{l}^*}\}_{l=1}^{m} \subset \Tb$ be an extremal rank-one POVM. It is readily seen that arguments presented in the ONLY IF part can be traced backwards to conclude that Alice has a corresponding extremal 
rank-one OP POVM of the form $\{ \proj{\tilde{l}} \}_{l=1}^{m}$.
\end{proof}

If Bob were to start protocol, one would have to check if $\Tb^{\s{(B)}}$ contains all elements of some rank-one POVM, instead of $\Tb^{\s{(A)}}$ (denoted by $\Tb$ here). Note that $\Tb^{\s{(B)}}$ is defined to be the complement of $\mathcal{T}^{\s{(B)}}$ in $\mathcal{S}$, where $\mathcal{T}^{\s{(B)}}$ is defined just such as $\mathcal{T}$ was in equation \eqref{T}, with the difference that $W_{\mathbf{i}} \equiv W_{ij}W_{i'j'}^\dag$, not $W_{ij}^\dag W_{i'j'}$. 

Any $d$ dimensional subspace of $\Ss$ is called a \emph{maximally abelian subspace} (MAS) if all matrices in it commute. Any MAS can be associated with a unique common eigenbasis such that all hermitian matrices, which are diagonal in said common eigenbasis, lie in the MAS. 

\begin{cor} 
\label{cor2} 
$\rr{1},\rr{2},\cdots,\rr{n}$ are $1$-LOCC distinguishable using only projective measurements on $\mathcal{H}_{\s{A}}$ and $\mathcal{H}_{\s{B}}$, if and only if $\Tb$ contains a MAS. 
\end{cor}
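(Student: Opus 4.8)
The plan is to deduce Corollary~\ref{cor2} from Theorem~\ref{thm1} by noting that a MAS is, up to complex conjugation of its eigenbasis, exactly the real span of the effects of a von Neumann (rank-one projective) measurement, and that restricting Alice to von Neumann measurements is the same as restricting the whole protocol to projective measurements. The proof of Theorem~\ref{thm1} already supplies the bridge: it establishes ``$\{\proj{\tilde l}\}_l$ is an orthogonality-preserving (OP) POVM on Alice's side $\iff$ $\{\proj{\tilde l^{\,*}}\}_l\subset\Tb$'', where $\ket{\tilde l^{\,*}}$ is the entrywise complex conjugate; since conjugation sends an orthonormal basis to an orthonormal basis, it is harmless bookkeeping. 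I would prove the two implications separately; in each direction the only substantive point is the passage between ``projective protocol'' and ``Alice measures in an orthonormal basis''.

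For the \emph{if} part I would start from a MAS $\mathcal{M}\subseteq\Tb$ with common eigenbasis $\{\ket{g_l}\}_{l=1}^d$, so that each rank-one projector $\proj{g_l}$ lies in $\mathcal{M}\subseteq\Tb$. Running the ``IF'' argument of Theorem~\ref{thm1} backwards with $\ket{\tilde l^{\,*}}=\ket{g_l}$ gives Alice the OP von Neumann measurement $\{\proj{\bar g_l}\}_{l=1}^d$ (conjugate basis), which is projective. After outcome $k$ the unnormalised post-measurement state is $(\proj{\bar g_k}\otimes\mathbb{1}_{\s{B}})\rr{i}(\proj{\bar g_k}\otimes\mathbb{1}_{\s{B}})=\proj{\bar g_k}\otimes\sigma^{(k)}_i$ for a positive operator $\sigma^{(k)}_i$ on $\mathcal{H}_{\s{B}}$, and OP forces $Tr(\sigma^{(k)}_i\sigma^{(k)}_{i'})=0$ for $i\neq i'$, i.e.\ the $\sigma^{(k)}_i$ have mutually orthogonal supports. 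Hence Bob finishes with the projective measurement onto those supports (completed arbitrarily), and $\sum_k Tr(\sigma^{(k)}_i)=Tr\,\rr{i}=1$ ensures every $i$ is actually flagged; so the whole protocol is projective.

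For the \emph{only if} part I would take a perfectly-distinguishing protocol that is projective throughout, with Alice starting by a projective measurement $\{\Pi_l\}_l$. Success forces $\{\Pi_l\}_l$ to be OP, and then (repeating the computation in the proof of Theorem~\ref{thm1} with a general projector $\Pi_k$ in place of $\proj{\tilde k}$) $\{\Pi_l^{\,*}\}_l\subset\Tb$ — but this only produces a commuting family of orthogonal projectors in $\Tb$, spanning a subspace of dimension below $d$ if some $\Pi_l$ has rank $\geq 2$. To close this gap I would use that, on Alice's branch $l$, Bob distinguishes $(\Pi_l\otimes\mathbb{1}_{\s{B}})\rr{i}(\Pi_l\otimes\mathbb{1}_{\s{B}})$ by a \emph{projective} measurement on $\mathcal{H}_{\s{B}}$: collecting Bob's outcomes according to the index $i$ they certify yields mutually orthogonal subspaces $V^{(l)}_i\subseteq\mathcal{H}_{\s{B}}$ with $\mathrm{supp}\big((\Pi_l\otimes\mathbb{1}_{\s{B}})\rr{i}(\Pi_l\otimes\mathbb{1}_{\s{B}})\big)\subseteq\Pi_l\mathcal{H}_{\s{A}}\otimes V^{(l)}_i$. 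Refining each $\Pi_l$ to rank one in \emph{any} orthonormal basis $\{\ket{f_{l,a}}\}$ of $\Pi_l\mathcal{H}_{\s{A}}$ then keeps the measurement OP, since for outcome $(l,a)$ and index $i$ the post-measurement state lives in $\mathrm{span}\{\ket{f_{l,a}}\}\otimes V^{(l)}_i$ and these are orthogonal across $i$, while Bob's original measurements still finish the job. So without loss of generality Alice's starting measurement is a von Neumann measurement $\{\proj{e_l}\}_{l=1}^d$, whence $\{\proj{e_l^{\,*}}\}_{l=1}^d\subset\Tb$; the real span of these $d$ pairwise orthogonal rank-one projectors is a $d$-dimensional subspace of $\Ss$ all of whose elements commute, i.e.\ a MAS inside $\Tb$.

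The step I expect to be the real obstacle is exactly this last reduction: a careless refinement of a higher-rank projective measurement can destroy the OP property (for instance $\ket{\Phi^+}$ and $\ket{\Phi^-}$ are perfectly distinguished with $\Pi_1=\mathbb{1}_2$ but refining that in the computational basis breaks orthogonality), so one genuinely has to exploit the projectivity of Bob's part to guarantee that some rank-one refinement exists. The remaining ingredients — the conjugation bookkeeping, deducing orthogonal supports from $Tr(\sigma_i\sigma_{i'})=0$, and recognising the span of an orthonormal set of rank-one projectors as a MAS — are routine and parallel the proof of Theorem~\ref{thm1}.
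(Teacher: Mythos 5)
Your proof is correct and follows essentially the same route as the paper: both directions are deduced from Theorem~\ref{thm1}, with the span of the $d$ conjugated rank-one projectors giving the MAS in one direction and the common eigenbasis of the MAS giving Alice's orthogonality-preserving rank-one projective measurement in the other. The only real difference is that you explicitly justify the reduction of a general projective starting measurement to a rank-one one (via Bob's projective follow-up and the orthogonal subspaces $V^{(l)}_i$), a step the paper simply asserts when it says Alice ``can initiate protocol by an OP rank-one projective measurement''; note only that your $\ket{\Phi^{\pm}}$ aside is slightly misstated, since with $\Pi_1=\mathbb{1}_2$ those states are not in fact distinguished by the ensuing protocol, though the point it illustrates (naive refinement can break the OP property) is sound and does not affect your argument.
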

\begin{proof}
\textbf{ONLY IF}: Let $\rr{i}$'s be $1$-LOCC distinguishable using only projective measurements on $\mathcal{H}_{\s{A}}$ and $\mathcal{H}_{\s{B}}$. Thus Alice can initiate protocol by an OP rank-one projective measurement $\{ \proj{k} \}_{k=1}^{d}$. Then theorem  \ref{thm1} (ONLY IF part) implies that $\Tb$ contains all projectors of a rank-one projective measurement $\{ \proj{k^*}\}_{k=1}^{d}$. $span \left( \{ \proj{k^*}\}_{k=1}^{d} \right)$ is a MAS in $\Tb$. \textbf{IF}: Assume that $\Tb$ contains a MAS of $\Ss$. This MAS contains all matrices which are diagonal in MAS's common eigenbasis $\{ \ket{k^*}\}_{k=1}^{d}$. Thus this MAS contains the subset $\{\proj{k^*}\}_{k=1}^{d}$, which is a rank-one projective measurement. Then theorem (IF part) \ref{thm1} implies that $\rr{i}$'s are $1$-LOCC distinguishable by projective measurements. \end{proof}

The significance of corollary \ref{cor2} is that for certain values of $dim \T_\bot$, it is easy to check if $\T_\bot$ contains a MAS or not, which immediately indicates the existence or non-existence of a $1$-LOCC protocol (using only rank-one projective measurements). 

Non-existence of a MAS in $\T_\bot$ does not rule out the existence of a non-projective extremal rank-one POVM $\{\proj{\tilde{l}^*}\}_{l=1}^{m}$ in $\T_\bot$, where $m > d$. Theorem \ref{thm1} implies that if $\Tb$ contains $\{\proj{\tilde{l}^*}\}_{l=1}^{m}$, then there exists a $1$-LOCC distinguishability protocol which commences with an OP non-projective extremal rank-one POVM $\{\proj{\tilde{l}}\}_{l=1}^{m}$. Then one can consider $\mathcal{H}_{\s{A}}$ to be a $d$-dimensional subspace of an extended $m$-dimensional space $\mathcal{H}_{A'}$, so that $\ket{\psi_{\s{ij}}}_{\s{AB}}$ $\longrightarrow$ $\ket{\psi_{\s{ij}}}_{\s{A'B}}$ lie in $\ha$. Then $\Ss'$, $\T'$ and $\Tb'$ are spaces of $m \times m$ hermitian matrices corresponding to Alice's extended space $\mathcal{H}_{\s{A'}}$, and $\Tb'$ will contain an $m$-dimensional MAS, which corresponds to an $m$-element rank one projective measurement on $\mathcal{H}_{\s{A'}}$. This $m$-element projective measurement reduces to $\{\proj{\tilde{l}^*}\}_{l=1}^{m}$ 
when $\mathcal{H}_{\s{A'}}$ is limited to $\mathcal{H}_{\s{A}}$. Note that since POVM elements of any extremal rank-one POVM are linearly independent (LI) \cite{A05}, $m \le d^2$. It is sensible to search for an $m$-dimensional MAS in $\Tb'$ after confirming that $\Tb$ doesn't contain a $d$-dimensional MAS. Often the value of $dim \Tb$ itself gives information about OP rank-one POVMs which Alice can perform, e.g., Walgate et al's result \cite{W00}, that any two orthogonal bipartite pure states are $1$-LOCC distinguishable, which corresponds to the cases $dim \Tb \ge d^2 -2$. I give an alternative proof of Walgate et al's result in the supplemental material \cite{suppstructure}. Another example: when $dim \Tb = 1$, $\rr{i}$'s aren't distinguishable by LOCC at all \cite{Y07,C08}. For $1$-LOCC, corollary \ref{cor5} makes a stronger statement.

\begin{cor}
\label{cor5}
If $dim \Tb \le d-1$, there is no $1$-LOCC protocol which Alice can initiate to distinguish the states. 
\end{cor}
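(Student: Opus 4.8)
The plan is to reduce immediately to theorem \ref{thm1}, which states that (Alice-initiated) $1$-LOCC distinguishability holds if and only if $\Tb$ contains all elements of an extremal rank-one POVM. So it suffices to prove the contrapositive: if $dim \Tb \le d-1$, then $\Tb$ cannot contain all the elements of any extremal rank-one POVM. I would argue by contradiction, assuming $\{ \proj{\tilde{l}^*} \}_{l=1}^{m} \subset \Tb$ is an extremal rank-one POVM, so that $\sum_{l=1}^{m}\proj{\tilde{l}^*} = \mathbb{1}_d$.

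The crux is a dimension count inside $\Ss$. Since the operator on the right-hand side has rank $d$, no nonzero vector can be orthogonal to every $\ket{\tilde{l}^*}$; hence $\{\ket{\tilde{l}^*}\}_{l=1}^{m}$ spans $\mathcal{H}_{\s{A}} \cong \mathbb{C}^d$, and in particular $m \ge d$. After relabelling, take $\ket{\tilde{1}^*},\dots,\ket{\tilde{d}^*}$ to be linearly independent, and let $\{\ket{f_k}\}_{k=1}^{d}$ be the dual basis, $\braket{f_k}{\tilde{l}^*} = \delta_{kl}$. I would then show $\proj{\tilde{1}^*},\dots,\proj{\tilde{d}^*}$ are linearly independent over $\mathbb{R}$ as elements of $\Ss$: from a real relation $\sum_{l=1}^{d} c_l \proj{\tilde{l}^*} = 0$, sandwiching between $\bra{f_k}$ and $\ket{f_k}$ yields $c_k = 0$ for each $k$. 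Hence $\Tb$ contains $d$ linearly independent hermitian matrices, forcing $dim \Tb \ge d$, which contradicts $dim \Tb \le d-1$ and finishes the proof.

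I do not anticipate a real obstacle: extremality of the POVM is not even needed here (only that a rank-one POVM resolves the identity, which already forces its vectors to span and hence $m \ge d$), and linear independence of rank-one projectors built from linearly independent vectors is elementary. The only point needing a little care is that independence must be taken over $\mathbb{R}$ inside $\Ss$ rather than over $\mathbb{C}$ in the full matrix algebra, which the dual-basis sandwiching argument above handles directly. As a consistency check, this also recovers and strengthens the earlier remark that $dim \Tb = 1$ precludes $1$-LOCC distinguishability (indeed any $dim \Tb < d$ does), in agreement with $\mathbb{1}_d \in \Tb$ being the only forced element when $dim \Tb = 1$.
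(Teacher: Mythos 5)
Your proposal is correct, and it follows the same basic route as the paper: reduce to Theorem \ref{thm1} and rule out a rank-one POVM inside $\Tb$ by a dimension count. The one substantive difference is how the count is justified. The paper's proof is a one-liner that leans on the cited fact that \emph{all} $m$ elements of an extremal rank-one POVM are linearly independent, so containment in $\Tb$ would force $\dim\Tb \ge m \ge d$. You instead prove the needed independence from scratch: since $\sum_{l}\proj{\tilde{l}^*}=\mathbb{1}_d$ the vectors $\ket{\tilde{l}^*}$ span $\mathbb{C}^d$, and for any $d$ linearly independent ones the dual-basis sandwich $\bra{f_k}\cdot\ket{f_k}$ kills the coefficients of a real relation among the projectors, giving $\dim\Tb\ge d$ directly. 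This buys a self-contained argument that never uses extremality (so it excludes \emph{any} rank-one POVM from a $\Tb$ of dimension $\le d-1$, not just extremal ones), at the cost of a slightly longer write-up; the paper's version is shorter but rests on the external linear-independence result of \cite{A05,C04}. Both are sound, and your observation that only the resolution of the identity is needed is a fair mild strengthening of the paper's remark.
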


\begin{proof}
If $dim \Tb \le d-1$, $\Tb$ can't contain all $m$ ($\ge d$) LI elements of an extremal rank-one POVM. Then theorem \ref{thm1} implies that there is no such protocol. \end{proof}

\begin{cor}
\label{cor3}
When  $dim \Tb = d$, states are $1$-LOCC distinguishable if and only if $\Tb$ is a MAS of $\Ss$. 
\end{cor}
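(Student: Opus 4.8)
The plan is to treat the two implications separately, leaning on Theorem~\ref{thm1} for the hard direction and on Corollary~\ref{cor2} for the easy one, together with the already-cited fact that the elements of an extremal rank-one POVM are linearly independent \cite{A05}.

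For the \textbf{if} direction I would simply observe that if $\Tb$ is a MAS of $\Ss$ then, trivially, $\Tb$ contains a MAS (namely itself), so Corollary~\ref{cor2} immediately yields that $\rr{1},\rr{2},\cdots,\rr{n}$ are $1$-LOCC distinguishable (in fact using only projective measurements). Nothing further is needed here.

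For the \textbf{only if} direction, suppose the states are $1$-LOCC distinguishable and $\dim\Tb=d$. By Theorem~\ref{thm1}, $\Tb$ contains all $m$ elements $\proj{\tilde{l}^*}$, $l=1,\dots,m$, of some extremal rank-one POVM on $\mathbb{C}^d$. Since these $m$ operators are linearly independent \cite{A05} and lie inside the $d$-dimensional space $\Tb$, we get $m\le d$; on the other hand $\sum_{l=1}^{m}\proj{\tilde{l}^*}=\mathbb{1}_d$ with each summand of rank one, so subadditivity of rank forces $d=rank(\mathbb{1}_d)\le m$. Hence $m=d$, and the $d$ operators $\proj{\tilde{l}^*}$ form a basis of $\Tb$. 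The key step is then to arrange the vectors $\ket{\tilde{l}^*}$ as the columns of a $d\times d$ matrix $M$: the POVM condition reads $MM^\dagger=\mathbb{1}_d$, i.e.\ $M$ is unitary, so $\{\ket{\tilde{l}^*}\}_{l=1}^{d}$ is an orthonormal basis of $\mathbb{C}^d$ and $\{\proj{\tilde{l}^*}\}_{l=1}^{d}$ is a rank-one projective measurement. Its span is a $d$-dimensional MAS contained in $\Tb$, and since $\dim\Tb=d$ this span is all of $\Tb$; therefore $\Tb$ is a MAS.

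The two rank bounds and the computation $MM^\dagger=\mathbb{1}_d$ are immediate, so I do not expect a genuine obstacle; the only point worth isolating as the crux is the observation that $d$ rank-one positive operators summing to $\mathbb{1}_d$ are automatically projectors onto an orthonormal basis, which is precisely what promotes "extremal rank-one POVM" to "projective measurement" once the dimension count pins $m=d$. In effect this corollary is a clean degenerate case of Theorem~\ref{thm1} and Corollary~\ref{cor2}.
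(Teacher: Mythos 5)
Your proof is correct and takes essentially the same route as the paper: the if direction is exactly the IF part of Corollary~\ref{cor2}, and the only-if direction uses Theorem~\ref{thm1} together with linear independence of extremal rank-one POVM elements to force $m=d$, whence the relating matrix is unitary, the POVM is a rank-one projective measurement, and its span (a MAS) must equal the $d$-dimensional $\Tb$. Your explicit rank-subadditivity argument for $m\ge d$ merely makes precise a step the paper leaves implicit.
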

\begin{proof} \textbf{IF:} Already covered in corollary \ref{cor2}. \textbf{ONLY IF:} Given that $dim \Tb = d$ and the states are $1$-LOCC distinguishable. Theorem \ref{thm1} implies that $\Tb$ contains all POVM elements of an extremal rank-one POVM $\{ \proj{\tilde{k}^*} \}_{k=1}^{m}$. Since $dim \Tb = d$, elements of an extremal rank-one POVM being LI \cite{C04} implies that $m=d$. Thus the isometric matrix relating $\{ \ket{\tilde{k}^*} \}_{k=1}^{d}$ to an ONB of $\mathbb{C}^d$ has to be a $d \times d$ unitary matrix, which implies that $\{ \proj{\tilde{k}^*} \}_{k=1}^{d}$ $\longrightarrow$ $\{ \proj{k^*} \}_{k=1}^{d}$ is a rank-one projective measurement. Since $span \left(\{ \proj{k^*} \}_{k=1}^{d}\right)$ $=$ $\Tb$, $\Tb$ is a MAS of $\Ss$. 
\end{proof}

Consider the case when $n=d$ and the states are pure: $\rr{i} \longrightarrow \ket{\psi_{i}}_{\s{AB}}$. Then the index set $\mathcal{I}$ is $\left\{ (i,i'),\; \forall \; 1 \leq i < i' \leq d \right\}$. The cardinality of $\mathcal{I}$ now is $\frac{d(d-1)}{2}$. One can generally expect $\{ \Hh, \A\}_{\mathbf{i} \in \mathcal{I}}$ to be a LI set, which implies that $dim \T = d(d-1)$ and $dim \Tb = d$ for almost all sets of $d$ orthogonal states in $\h$. This is indeed true; proof for this was essentiallygiven by Cohen in \cite{C08}, where he showed that almost all sets of $n \ge d+1$ orthogonal multipartite qudit states in $d^{\otimes N}$ systems ($N \ge 2$) are locally indistinguishable, but for the sake of completeness I give a rigorous proof for this case in the supplemental material \cite{suppstructure}. Thus \emph{corollary \ref{cor3} gives the necessary and sufficient condition for the $1$-LOCC distinguishability of almost all sets of $d$ orthogonal pure states from $\h$}. Next, consider an example of 
this. 

\begin{example}
\label{exam2}
Define the following states in $\C{4}$: \small \begin{equation}
\label{GBS}
\ket{\psi_{\s{nm}}}_{\s{AB}} \equiv \sum_{j,k=0}^{3}\left( W_{nm} \right)_{kj}\ket{s_j}_{\s{A}}\ket{s_k}_{\s{B}},
\end{equation}
\normalsize where $\left(W_{\s{nm}}\right)_{\s{kj}} \equiv \frac{e^{\frac{ i \pi j  n }{2}}}{2}\delta_{j \oplus_4 m, k}$, $\forall$ $j,k = 0,1,2,3$. Note that any two $W_{nm}$ matrices are pairwise orthogonal.
For $1$-LOCC of the states $\{ \ket{\psi_{\s{00}}}_{\s{AB}}$, $\ket{\psi_{\s{01}}}_{\s{AB}}$, $\ket{\psi_{\s{10}}}_{\s{AB}}$, $\ket{\psi_{\s{33}}}_{\s{AB}}\}$, $\T$ is spanned by the hermitian matrices: \small$ \frac{W_{01}+W_{03}}{2}$, $\frac{W_{01}-W_{03}}{2i}$, $\frac{W_{10}+W_{30}}{2}$, $\frac{W_{10}-W_{30}}{2i}$, $\frac{W_{33}+iW_{11}}{2}$, $\frac{W_{01}-W_{03}}{2i}$, $\frac{W_{13}-iW_{31}}{2}$, $\frac{W_{13}+iW_{31}}{2i}$, $i\frac{W_{32}+W_{12}}{2}$, $\frac{W_{32}-W_{12}}{2}$, $\frac{W_{23}-W_{21}}{2}$,$\frac{W_{23}+W_{21}}{2i}$\normalsize. Hence $dim\T = 12$. Thus $dim \Tb = 4$, where $\Tb$ is spanned by the hermitian matrices $\mathbb{1}_4$, $ W_{22}$, $ W_{02}$, and $ W_{20}$. Note that all these matrices commute with each other. Thus $\Tb$ is a MAS. The common eigenbasis, which diagonalizes any matrix in $\Tb$ is $\big\{ \frac{1}{\sqrt{2}}(1,0,1,0)^T $, $\frac{1}{\sqrt{2}}(1,0,-1,0)^T$,$\frac{1}{\sqrt{2}}(0,1,0,1)^T$, $ \frac{1}{\sqrt{2}}(0,1,0,-1)^T \big\}$. It's then seen that Alice can initiate 
a $1$-LOCC protocol to distinguish the given set of states by performing rank-one projective measurement in the ONB $\big\{ \frac{\ket{0}{\s{A}}+\ket{2}_{\s{A}}}{\sqrt{2}} $, $\frac{\ket{0}{\s{A}}-\ket{2}_{\s{A}}}{\sqrt{2}}$, $\frac{\ket{1}{\s{A}}+\ket{3}_{\s{A}}}{\sqrt{2}}$, $\frac{\ket{1}{\s{A}}-\ket{3}_{\s{A}}}{\sqrt{2}}\big\}$.
\end{example}

So when $dim \Tb \le d$ or $dim \Tb \ge d^2-2$, one can conclude if $\Tb$ contains a MAS or not. When $ d+1 \leq dim \Tb \leq d^2-3$, it is difficult to establish the same, but one can give partial results. Let $dim \Tb = d + t$, where $ t \ge 1$. Let $\{ T_i \}_{i=1}^{d+t}$ be an ONB for $\Tb$. Let $\CC$ be the real vector space, spanned by matrices in $\{ i[ T_j,T_k ] \; $ $ | \; 1 \leq j < k \leq d+t  \}$, where $[ T_j,T_k ]$ $\equiv$ $T_j T_k$ $-$ $T_k T_j$. 

\begin{thm}
\label{thm2}
When $ 1 \le t  \leq \sqrt{3 d^2 - 3 d + \frac{1}{4}} - (d-\frac{3}{2})$ , $\Tb$ contains no MAS if $dim \CC$ $>$ $td + \frac{t(t-3)}{2}$.
\end{thm}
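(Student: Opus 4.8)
The plan is to prove the contrapositive: I will show that \emph{if} $\Tb$ contains a MAS, then necessarily $dim\,\CC \le td + \frac{t(t-3)}{2}$, which is exactly the negation of the hypothesis. Note first that the bound $t\le\sqrt{3d^2-3d+\frac{1}{4}}-(d-\frac32)$ will play no role in this argument; as I explain at the end, it is equivalent to $td+\frac{t(t-3)}{2}\le d^2-1$ and appears only to keep the statement from being vacuous.

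The first step is to observe that $\CC$ does not depend on the chosen ONB $\{T_i\}_{i=1}^{d+t}$ of $\Tb$: for $X=\sum_j x_j T_j$ and $Y=\sum_k y_k T_k$ in $\Tb$ one has $[X,Y]=\sum_{j<k}(x_j y_k - x_k y_j)[T_j,T_k]$, so $\CC = span\{\, i[X,Y] : X,Y\in\Tb\,\}$. I would use this freedom to replace $\{T_i\}$ by an ONB of $\Tb$ adapted to the assumed MAS $\mathcal{M}$: write it as $\{M_1,\dots,M_d\}\cup\{N_1,\dots,N_t\}$, where $\{M_a\}_{a=1}^{d}$ is an ONB of $\mathcal{M}$ and $\{N_p\}_{p=1}^{t}$ is an ONB of the orthogonal complement of $\mathcal{M}$ inside $\Tb$. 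Passing to the common eigenbasis of $\mathcal{M}$, each $M_a$ becomes a real diagonal matrix $\mathrm{diag}(\mu^{(a)})$ with $\mu^{(a)}\in\mathbb{R}^d$, and the $\mu^{(a)}$ form a basis of $\mathbb{R}^d$ because a $d$-dimensional MAS is the full algebra of matrices diagonal in that basis.

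The second step is a dimension count on the three kinds of commutators spanning $\CC$. Since $\mathcal{M}$ is abelian, $[M_a,M_b]=0$. For the mixed terms, fix $p$ and note $(i[M_a,N_p])_{kl}=i(\mu^{(a)}_k-\mu^{(a)}_l)(N_p)_{kl}$; hence $span\{\, i[M_a,N_p] : 1\le a\le d\,\}$ equals the image of the linear map $\mathbb{R}^d\to\Ss$, $\nu\mapsto\big(i(\nu_k-\nu_l)(N_p)_{kl}\big)_{k,l}$ (using that $\{\mu^{(a)}\}_a$ spans $\mathbb{R}^d$), whose kernel contains the all-ones vector; thus this span has dimension at most $d-1$. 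Adding the at most $\binom{t}{2}$ commutators $i[N_p,N_q]$ gives $dim\,\CC\le t(d-1)+\binom{t}{2}=td+\frac{t(t-3)}{2}$, which is the contrapositive of the theorem.

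The two substantive points — rather than genuine obstacles — are (i) recognizing that $\CC$ is basis-independent, so that one may align the ONB of $\Tb$ with $\mathcal{M}$; and (ii) the ``$d-1$'' instead of ``$d$'' in the mixed count, which is precisely the observation that translating a diagonal matrix by a multiple of $\mathbb{1}_d$ leaves its commutator with $N_p$ unchanged — this is what sharpens the naive estimate $td+\binom{t}{2}$ to the stated value. Finally, for the hypothesis to be satisfiable one needs $td+\frac{t(t-3)}{2}<d^2$; since every commutator $i[T_j,T_k]$ is a traceless hermitian matrix one always has $dim\,\CC\le d^2-1=dim\,\Ss-1$, and a short computation shows $t\le\sqrt{3d^2-3d+\frac{1}{4}}-(d-\frac32)$ is exactly the condition $td+\frac{t(t-3)}{2}\le d^2-1$, i.e. the largest range of $t$ in which the conclusion is not automatically vacuous.
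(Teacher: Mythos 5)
Your proposal is correct and follows essentially the same route as the paper: prove the contrapositive by choosing the ONB of $\Tb$ adapted to the assumed MAS and bounding $\dim \CC$ through the possibly non-vanishing commutators, so that a MAS forces $\dim \CC \le td + \frac{t(t-3)}{2}$. Your kernel observation (the all-ones vector, i.e.\ $\mathbb{1}_d$ lying in the MAS) is precisely what sharpens the naive count $td+\binom{t}{2}$ to $td+\frac{t(t-3)}{2}$ — a step the paper asserts without spelling out — and your reading of the restriction on $t$ as the non-vacuity condition $td+\frac{t(t-3)}{2}\le d^2-1$ agrees with the paper's use of $\dim \CC \le d^2 - 1$.
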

\begin{proof} If $\Tb$ contains a MAS, choose $\{ T_i \}_{i=1}^{d+t}$ such that $\{ T_i \}_{i=1}^{d}$ is an ONB for this MAS. Then number of non-zero commutators in $\{ i[ T_j,T_k ] \; $ $ | \; 1 \leq j < k \leq d+t  \}$ is at most $td + \frac{t(t-3)}{2}$, implying that $dim \CC$ can be at most $Min\left\{ td + \frac{t(t-3)}{2},d^2-1\right\}$. When $ 1 \leq t  \leq \sqrt{3 d^2 - 3 d + \frac{1}{4}} - (d-\frac{3}{2})$ then $d \le td + \frac{t(t-3)}{2}$ $\leq$ $d^2-1$. Then if $dim \CC$ $>$ $td + \frac{t(t-3)}{2}$, $\Tb$ contains no MAS. 
\end{proof}

For $dim \Tb = d+1$, I give necessary and sufficient conditions for $\Tb$ to contain a MAS. Let $\{G_i\}_{i=1}^{\s{dim \CC}}$ be an ONB for $\CC$. For each $j \in \{ 1, 2, \cdots, dim \CC\}$, define $d+1 \times d+1$ real antisymmetric matrix $\Gamma_j$, whose matrix elements are given by $(\Gamma_j)_{kl} = iTr( G_j [T_k,T_l])$. Let $\mathcal{G}$ be the real vector space spanned by the $\Gamma_j$'s. Let $\{ \Omega_j \}_{j=1}^{\s{dim\mathcal{G}}}$ be an ONB for $\mathcal{G}$. Theorem \ref{thm2} allows us to assume that $dim \CC \le d-1$, which implies $dim \mathcal{G} \le d-1$.

\begin{thm}
\label{thm3}
When $dim \Tb = d+1$, $\Tb$ contains a MAS if and only if $\Omega_j$ is rank $2$ for all $j=1,2,\cdots, dim\mathcal{G}$ and $\cap_{j=1}^{\s{dim\mathcal{G}}} Supp ( \Omega_j )$ is one dimensional. 
\end{thm}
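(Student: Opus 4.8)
The plan is to translate the property ``$\Tb$ contains a MAS'' into a block normal form for the structure matrices $\Gamma_j$, and then read both directions of the equivalence off that form. Throughout I would use the elementary observation that replacing the ONB $\{T_i\}_{i=1}^{d+1}$ of $\Tb$ by $\{\sum_k O_{ik}T_k\}$ for a real orthogonal $O$ replaces each $\Gamma_j$ (and each $\Omega_j$) by $O\Gamma_j O^\top$; hence ``rank $\Omega_j$'' and ``$dim \cap_j Supp(\Omega_j)$'' are genuine invariants of $\Tb$, and I am free to compute them in whatever basis of $\Tb$ is convenient.

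\emph{The if direction.} Suppose every $\Omega_j$ has rank $2$ and $\cap_{j} Supp(\Omega_j)$ is the one-dimensional space $\mathbb{R}v$. Choose the basis of $\Tb$ so that $T_{d+1}$ corresponds to $v$, i.e.\ $v=e_{d+1}$. A real antisymmetric matrix of rank $2$ whose support is a $2$-plane through $e_{d+1}$ is necessarily of the form $e_{d+1}w^\top-we_{d+1}^\top$ with $w\perp e_{d+1}$, and therefore has vanishing upper-left $d\times d$ block. Since the $\Omega_j$ span $\mathcal{G}\ni\Gamma_j$, every $\Gamma_j$ has vanishing upper-left block, i.e.\ $Tr(G_j\,i[T_k,T_l])=0$ for all $1\le k,l\le d$ and all $j$. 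Because $\{G_j\}$ is a basis of $\CC$ and $i[T_k,T_l]\in\CC$, this forces $[T_k,T_l]=0$ for all $k,l\le d$, so $span\{T_1,\dots,T_d\}$ is a $d$-dimensional abelian subspace of $\Tb$, i.e.\ a MAS.

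\emph{The only if direction.} Suppose $\Tb$ contains a MAS $\mathcal{M}$; then $\mathbb{1}_d\in\mathcal{M}$, and I would pick the ONB of $\Tb$ with $\{T_1,\dots,T_d\}$ spanning $\mathcal{M}$ and $T_{d+1}\perp\mathcal{M}$. Since $[T_k,T_l]=0$ for $k,l\le d$, exactly as above every $\Gamma_j$ has vanishing upper-left $d\times d$ block, hence equals $e_{d+1}w_j^\top-w_je_{d+1}^\top$ with $w_j\perp e_{d+1}$. The map $w\mapsto e_{d+1}w^\top-we_{d+1}^\top$ on $e_{d+1}^\perp$ is injective (evaluate at $e_{d+1}$), and one checks that $dim\mathcal{G}=dim\CC$, so the $\Gamma_j$ — and hence the $w_j$ — are linearly independent; consequently every nonzero element of $\mathcal{G}$ has rank exactly $2$, so in particular each $\Omega_j$ does, with $Supp(\Omega_j)$ a $2$-plane through $e_{d+1}$. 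Finally, if $w_1,w_2$ are independent then any vector lying in both $span\{e_{d+1},w_1\}$ and $span\{e_{d+1},w_2\}$ must, after removing its $e_{d+1}$-component, be a multiple of both $w_1$ and $w_2$, hence zero; so $\cap_j Supp(\Omega_j)=\mathbb{R}e_{d+1}$ is one-dimensional.

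\emph{Expected obstacle.} The delicate step is the last one: pinning $dim\cap_j Supp(\Omega_j)$ to $1$ rather than merely to $\le 2$. The containment $e_{d+1}\in\cap_j Supp(\Omega_j)$ is automatic once a MAS exists, but the collapse to a single dimension requires at least two independent generators $w_j$, i.e.\ $dim\mathcal{G}=dim\CC\ge 2$; the residual subtlety is the degenerate case $dim\mathcal{G}=1$ — not excluded a priori by Theorem \ref{thm2} in this regime — where a lone rank-$2$ generator already produces a MAS via the if-direction argument and so has to be folded into the criterion separately. A subsidiary point requiring care, used in both directions, is the normal-form claim that a real antisymmetric $(d+1)\times(d+1)$ matrix with zero upper-left $d\times d$ block and nonzero last row has rank exactly $2$ with support $span\{e_{d+1},w\}$; this is what makes the whole block-structure reduction go through.
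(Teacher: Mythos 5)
Your proposal follows essentially the same route as the paper's proof: both directions rest on the normal form $\Omega_j=\un{e}_{\s{d+1}}w_j^{\s{T}}-w_j\un{e}_{\s{d+1}}^{\s{T}}$ for a rank-two real antisymmetric matrix whose support contains the common vector, on the equivalence between the vanishing of the upper-left $d\times d$ block of the $\Gamma_j$ and $Tr(G_j[T_k,T_l])=0$ for $1\le k,l\le d$, and on the fact that $\{G_j\}$ being an ONB of $\CC$ then forces $[T_k,T_l]=0$, so that $span\{T_1,\dots,T_d\}$ is a MAS. Your auxiliary observation that $\dim\mathcal{G}=\dim\CC$ (injectivity of $G\mapsto\Gamma$ on $\CC$) cleanly gives the linear independence of the $w_j$, a point the paper leaves implicit.

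The obstacle you flag is genuine, and it is a gap in the paper's own ONLY IF argument as much as in yours: the step ``orthogonality of the $\Omega_j$ forces $\cap_j Supp(\Omega_j)$ to be one-dimensional'' requires at least two independent $\Omega_j$, i.e.\ $\dim\mathcal{G}\ge 2$. When $\dim\mathcal{G}=1$ the intersection is the two-dimensional support of the single $\Omega_1$, yet a MAS can exist. A concrete instance with $d=2$: take $W_1\propto\mathbb{1}_2$ and $W_2\propto\sigma_z$ (two Bell states), so that $\T=span\{\sigma_z\}$ and $\Tb=span\{\mathbb{1}_2,\sigma_x,\sigma_y\}$ has $\dim\Tb=d+1$ and contains the MAS $span\{\mathbb{1}_2,\sigma_x\}$, while $\CC=span\{\sigma_z\}$ yields a single rank-two $\Omega_1$ whose support is two-dimensional; Theorem \ref{thm2} does not exclude this case since $\dim\CC=1=d-1$. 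So the criterion must either assume $\dim\mathcal{G}\ge 2$ or treat the degenerate case separately (where, as you note, a lone rank-two generator already produces a MAS via the IF argument). This is a caveat to the theorem's statement and the paper's proof, not a defect of your approach relative to it.
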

\begin{proof} \textbf{IF} Assume that $\Omega_j$ is rank $2$, $\forall$ $1 \leq j \leq dim\mathcal{G}$, and $\cap_{j=1}^{\s{dim\mathcal{G}}} Supp ( \Omega_j )$ is one dimensional, spanned by the real $(d+1)$-tuple $\un{e}_{\s{d+1}}$ $\equiv$ $(e_{\s{1 \, d+1}},e_{\s{2 \, d+1}},\cdots,e_{\s{d+1 \, d+1}})^{T}$. Since $\Omega_j$ is anti-symmetric and real, and since it is rank $2$, there exists a real $(d+1)$-tuple $\un{e}_{\s{j}}$ $\equiv$ $(e_{\s{1j}},e_{\s{2j}},\cdots,e_{\s{ d+1 \, j}})^{T}$ so that $\Omega_j = \un{e}_{\s{d+1}} {\un{e}}_{\s{j}}^{\s{T}} - \un{e}_{\s{j}} {\un{e}_{\s{d+1}}}^{\s{T}}$. Since $\Omega_j$ is invariant for any arbitrary value of inner product ${\un{e}_{\s{j}}}^{\s{T}}.\un{e}_{\s{d+1}}$, choose $\un{e}_{\s{j}}$ to be orthogonal to $\un{e}_{\s{d+1}}$. Then, for $\Omega_j$ to be orthogonal to $\Omega_{j'}$, it is required that ${\un{e}_{\s{j}}}^{\s{T}}. \un{e}_{\s{j'}}=0$. Let $\Gamma_j = \sum_{k=1}^{\s{dim\mathcal{G}}} \alpha_{kj} \Omega_{k}$ $=$ $\un{e}_{\s{d+1}} {\un{g}}_{\s{j}}^{\s{
T}} - \un{g}_{\s{j}} {\un{e}_{\s{d+1}}}^{\s{T}}$, where $\un{g}_{\s{j}}$ $\equiv$ $\sum_{k=1}^{\s{dim \mathcal{G}}} \alpha_{kj} \un{e}_{\s{k}} $. Hence $\Gamma_j$ are also rank $2$ matrices. Complete the ONB $\{ \un{e}_{\s{1}},\un{e}_{\s{2}},\cdots, \un{e}_{\s{dim \mathcal{G}}}$ $\un{e}_{\s{dim \mathcal{G}+1}}, \cdots, \un{e}_{\s{d+1}} \}$. One can normalize $\Omega_j$ so that $\{ \un{e}_{\s{j}} \}_{j=1}^{d+1}$ is an ONB for $\mathbb{C}^{d+1}$. Arrange ${\un{e}_{\s{j}}}^{\s{T}}$ as rows of a $d+1 \times d+1$ orthogonal matrix $O$ in ascending order of $j$ from $0$ to $d+1$. Then $O \, \un{e}_{\s{1}} = (1,0,0,\cdots,0)^T$, $O \, \un{e}_{\s{2}} = (0,1,0,\cdots,0)^T$, $\cdots$, $O \, \un{e}_{\s{d+1}} = (0,0,0,\cdots,1)^T$. For all $ 1 \le j \le dim \CC$ and for $1 \le k , l \le d$ \small
\begin{equation}
\label{if} 
\left( O \Gamma_j O^{\s{T}}\right)_{kl} = 0 \Longrightarrow Tr\left( G_j [T'_k,T'_l] \right)   = 0,
\end{equation}
where $T'_k \equiv \sum_{l=1}^{d+1}O_{kl}T_l$. Since $T'_k \in \Tb$, $[T'_k,T'_l]$ $\in$ $\CC$. But since $\{G_j\}_{j=1}^{\s{dim \CC}}$ is an ONB for $\CC$, equation \eqref{if} implies that $[T'_k,T'_l] = 0$. Thus $\{ T'_j \}_{j=1}^{d}$ spans a MAS in $\Tb$. \textbf{ONLY IF} Assume $\Tb$ contains a MAS and let $\{ T_j \}_{j=1}^{d}$ be an ONB for this MAS. Then $Tr(G_j[T_k,T_l]) = 0$ when $1 \leq k , l \leq d$. Thus the $d \times d$ upper diagonal block of $\Gamma_j$ is zero, which makes it a rank $2$ matrix. The same is true for $\{\Omega_j\}_{j=1}^{dim \mathcal{G}}$. For $\Omega_j$ and $\Omega_{j'}$ to be orthogonal one requires their corresponding $d+1$-th columns (and $d+1$-th rows) to be orthogonal as well. This implies $\cap_{j=1}^{\s{dim\mathcal{G}}} Supp ( \Omega_j )$ is spanned by only one vector which is $(0,0, \cdots ,0,1)^T$.
\end{proof}

Next, I give an example of theorem \eqref{thm3}.

\begin{example}
\label{exam3}
\label{ex3}
For \small$\{ \ket{\psi_{00}}_{\s{AB}}$, $\ket{\psi_{01}}_{\s{AB}}$, $\ket{\psi_{12}}_{\s{AB}}$, $\ket{\psi_{30}}_{\s{AB}}\}$\normalsize, $\Tb$ is spanned by $\{T_1=  \mathbb{1}_4$, $T_2=W_{02}$, $T_3=\frac{W_{21}-W_{23}}{2}$, $T_4=\frac{W_{21}+W_{23}}{2i}$, $T_5=W_{20} \}$. $\CC$ is spanned by \small \begin{align}
\label{2ex4}
G_1 = \frac{1}{2\sqrt{2}}\begin{pmatrix}
            0 &  1 & 0 & 1\\
            -1 &  0 & -1 & 0\\
            0 &  1 & 0 & 1\\
            -1 &  0 & -1 & 0
            \end{pmatrix}, 
G_2 = \frac{i}{2\sqrt{2}}\begin{pmatrix}
            0 & 1  & 0 & 1\\
            1 &  0 & 1 & 0\\
            0 &  1 & 0 & 1\\
            1 &  0 & 1 & 0
            \end{pmatrix}.
\end{align}\normalsize 
$\mathcal{G}$ is spanned by \small
\begin{align}
\label{2ex5}
\Omega_1 \propto \Gamma_1  = \begin{pmatrix}
            0 &  0 & 0 & 0  & 0\\
            0 &  0 & 0 & 0  & 0\\
            0 &  0 & 0 & 0  & 1\\
            0 &  0 & 0 & 0  & 0\\
            0 &  0 & -1 & 0  & 0
           \end{pmatrix}, \Omega_2 \propto \Gamma_2  = \begin{pmatrix}
            0 &  0 & 0 & 0  & 0\\
            0 &  0 & 0 & 0  & 0\\
            0 &  0 & 0 & 0  & 0\\
            0 &  0 & 0 & 0  & 1\\
            0 &  0 & 0 & -1 & 0
           \end{pmatrix}.
\end{align} \normalsize
$\Omega_1$ and $\Omega_2$ are rank $2$ and $ Supp ( \Omega_1 ) \cap  Supp ( \Omega_2 )$ is spanned by $(0,0,0,0,1)^T$. Since $(\Gamma_j)_{kl}$ $=$ $iTr(G_j[T_k,T_l])$, that the $4 \times 4$ upper diagonal block of $\Gamma_j$'s are zero implies that $\{ T_i \}_{i=1}^{4}$ span a MAS in $\Tb$. Upon computing the common eigenbasis of this MAS, we obtained the ONB:$\Big\{\frac{\ket{0}_{\s{A}}+\ket{1}_{\s{A}}+\ket{2}_{\s{A}}+\ket{3}_{\s{A}}}{2} $,  $\frac{ \ket{0}_{\s{A}}-\ket{1}_{\s{A}}-\ket{2}_{\s{A}}+\ket{3}_{\s{A}}}{2} $, $\frac{ \ket{0}_{\s{A}}+\ket{1}_{\s{A}}-\ket{2}_{\s{A}}-\ket{3}_{\s{A}}}{2} $,  $ \frac{\ket{0}_{\s{A}}-\ket{1}_{\s{A}}+\ket{2}_{\s{A}}-\ket{3}_{\s{A}}}{2} \Big\}$. Then the states can be distinguish when Alice starts by measuring in this ONB.
\end{example}

\textbf{Remarks and Summary:} Early in the paper, I made two assumptions to simplify notation. Results derived under these assumptions \emph{actually} hold for the more general scenarios: when either Alice or Bob can start the $1$-LOCC protocol and when $dim \mathcal{H}_{\s{A}}$ and $dim \mathcal{H}_{\s{B}}$ are unequal. A broad summary of results in this paper can then be given as follows: for the $i$-th party of the $d_A \otimes d_B$ dimensional bipartite system, the set of all sets of orthogonal bipartite states can be partitioned into different classes, based on value of $dim \Tb^{(i)}$ of each set of orthogonal bipartite states. In one sweep, results about existence of $1$-LOCC distinguishability protocols, which the $i$-th party can initiate, can be made about all sets of orthogonal bipartite states, which lie in certain classes. To add a final comment on the usefulness of this framework: note 
that in \cite{C08}, Cohen used the \emph{same} structure to show that almost all sets of $\ge d+1$ orthogonal $N$-qudit multipartite states (in $\left(\mathbb{C}\right)^{\otimes N}$) are not distinguishable by LOCC. Hence, I argue that a deeper study of this structure will be a rewarding experience for studying problems of distinguishability of orthogonal states by LOCC.  \smallbreak

\begin{acknowledgments}
I thank the first and the third referees for their useful comments. I also thank my PhD. adviser Sibasish Ghosh for his help.
\end{acknowledgments}


\begin{center}
\textbf{\large Supplemental Material}
\end{center}
\setcounter{equation}{0}
\setcounter{figure}{0}
\setcounter{table}{0}
\setcounter{page}{1}
\makeatletter

\renewcommand{\theequation}{S\arabic{equation}}
\renewcommand{\thefigure}{S\arabic{figure}}
\renewcommand{\bibnumfmt}[1]{[S#1]}
\renewcommand{\citenumfont}[1]{S#1}

\section{Extremality of POVMs}

Consider two POVMs with elements $\{ M^{\s{(1)}}_j\}_{j=1}^{m_1}$ and $\{ M^{\s{(2)}}_j\}_{j=1}^{m_2}$, where $\sum_{j=1}^{m_i} M^{\s{(i)}}_j = \mathbb{1}_{\s{A}}$ (where $\mathbb{1}_{\s{A}}$ is the identity operator acting on $\mathcal{H}_{\s{A}}$), for $i=1,2$. Define $E^{\s{(1)}}$ to be an ordered $(m_1 + m_2)$-tuple of observables acting on $\mathcal{H}_{\s{A}}$, such that the $k$-th component of $E^{\s{(1)}}$, i.e., $E^{\s{(1)}}_{\s{k}}$ is either a POVM element $M^{\s{(1)}}_j$ or is the null observable $0$ acting on $\mathcal{H}_{\s{A}}$, and also let $E^{\s{(1)}}$ be such that each POVM element from $\{ M^{\s{(1)}}_j\}_{j=1}^{m_1}$ appears once as some component of $\EE{1}$. Depending on the arrangement of $M^{\s{(1)}}_j$'s as components of $\EE{1}$, there are $\frac{(m_1 + m_2)!}{m_2!}$ such distinct ordered tuples $E^{\s{(1)}}$ corresponding to the POVM $\{ M^{\s{(1)}}_j\}_{j=1}^{m_1}$. Define $\EE{2}$ similarly for the second POVM. Choosing some two outcome probability $(p,1-p)$, (where $0 \leq p \
leq 1$), one can obtain a new POVM by point wise addition of components of $\EE{1}$ and $\EE{2}$, i.e., the set $\{ p \Ee{1}{k} + (1-p)\Ee{2}{k}, \; \forall \; 1 \leq k \leq m_1 + m_2 \; : \; p \Ee{1}{k} + (1-p)\Ee{2}{k} \neq 0  \}$ contains all elements of a POVM which is obtained from the convex sum of the original POVMs. In this way, the set of all POVMs is a convex set. An extremal POVM in this set is one which cannot be written as a convex sum (in the aforementioned fashion) of two or more distinct POVMs. 

It is also not necessary for a rank-one POVM to be an extremal rank-one POVM. For example, for $d=2$, consider the following POVM elements: $\{\frac{1}{2} \proj{0}, \frac{1}{2} \proj{1}, \frac{1}{2} \proj{+}, \frac{1}{2} \proj{-}\}$, where $\ket{+} \equiv \frac{1}{\sqrt{2}}(\ket{0} + \ket{1})$ and $\ket{-} \equiv \frac{1}{\sqrt{2}}(\ket{0} - \ket{1})$. This POVM is non-extremal because it can be written as a convex sum of two POVMs, $\{ \proj{0},\proj{1}\}$ and $\{ \proj{+}, \proj{-}\}$. The POVMs $\{ \proj{0},\proj{1}\}$ and $\{ \proj{+}, \proj{-}\}$ are extremal because they cannot be written as convex sums of other POVMs. Also, an extremal rank-one POVM need not be a rank-one projective POVM. For instance let $\ket{\tilde{v}_1} \equiv \frac{1}{\sqrt{2}} \ket{0}$, $\ket{\tilde{v}_2}\equiv \frac{1}{2} \ket{0} + \frac{1}{\sqrt{2}} \ket{1}$ and $\ket{\tilde{v}_3} \equiv \frac{1}{2} \ket{0} - \frac{1}{\sqrt{2}} \ket{1}$; then the set $\{ \proj{\tilde{v}_1}, \proj{\tilde{v}_2}, \proj{\tilde{v}_3} \}$ is an 
extremal but non-projective POVM. That said \emph{all} rank-one projective measurements are extremal. A necessary and sufficient condition for extremality of POVMs in terms of the Kraus operators of said measurement was first given by Choi \cite{Choi75}; it can be easily checked that the aforementioned POVM whose elements were $\{ \proj{\tilde{v}_i} \}_{i=1}^{3}$, satisfy these necessary and sufficient conditions for be an extremal POVM. 

\section{An alternative proof to Walgate et al's result \cite{W00}}
\label{wal}

Walgate et al's result \cite{W00} states that any two multipartite orthogonal pure states are always locally distinguishable. Their paper shows that the result for the multipartite case follows straightforwardly from the result for the bipartite case. Their proof for the bipartite case is constructive, i.e., they show that for any two orthogonal bipartite states there exists a $1$-LOCC protocol which Alice and Bob can perform to distinguish the two states. That said their protocol is complicated by the fact that the starting party (assumed here to always be Alice) has to perform SWAPPING operations onto a bigger subsystem. 

This result by Walgate corresponds to the case where $n=2$, i.e., all sets of two orthogonal bipartites states come within the classes corresponding to $dim \Tb \ge d^2 -2$. Here I show that when $dim \Tb \ge d^2 -2$, $\Tb$ always contains a MAS, implying the Alice can initiate the $1$-LOCC protocol by performing an OP rank-one projective measurement. Such a protocol is devoid of requiring any SWAPPING operations onto a bigger system.

\begin{thm}
\label{cor4}
When $dim \Tb \ge d^2 - 2$, $\Tb$ always contains a MAS.
\end{thm}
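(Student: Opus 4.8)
The plan is to reduce the statement to a classical fact about diagonals of matrices. Since $dim\,\Ss = d^2$ and $\Tb$ is the orthogonal complement of $\T$, the hypothesis $dim\,\Tb \ge d^2-2$ is the same as $dim\,\T \le 2$. By construction $\T$ is spanned by the $\Hh$ and $\A$, which are the Hermitian and anti-Hermitian parts of the traceless matrices $\W$; hence every element of $\T$ is a traceless Hermitian matrix. Recall that a MAS is precisely the set of all Hermitian matrices that are diagonal in some fixed ONB $\{\ket{k}\}_{k=1}^d$ of $\mathbb{C}^d$. Such a MAS is contained in $\Tb=\T^{\perp}$ if and only if $Tr(TD)=0$ for every $T\in\T$ and every $D$ diagonal in that basis; and since $Tr(TD)=\sum_k \pk{k}{T}{k}\,D_{kk}$ with the reals $D_{kk}$ arbitrary, this happens if and only if $\pk{k}{T}{k}=0$ for all $k$ and all $T\in\T$. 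So the theorem is equivalent to the assertion: given at most two traceless Hermitian $d\times d$ matrices, there exists an ONB of $\mathbb{C}^d$ in which both have vanishing diagonal.

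To treat all cases $dim\,\T\le 2$ uniformly, pick generators $T_1,T_2$ of $\T$ (allowing $T_2=0$, or $T_1=T_2=0$, if $dim\,\T<2$) and form the single, generally non-normal matrix $M:=T_1+iT_2$, which is traceless. Because $T_1,T_2$ are Hermitian, for any unit vector $v$ we have $\pk{v}{M}{v}=\pk{v}{T_1}{v}+i\pk{v}{T_2}{v}$ with both summands real, so $\pk{v}{M}{v}=0$ is equivalent to $\pk{v}{T_1}{v}=\pk{v}{T_2}{v}=0$. Thus it suffices to prove the Fillmore-type lemma: \emph{every traceless complex $d\times d$ matrix is unitarily similar to a matrix with identically zero diagonal.}

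I would prove this lemma by induction on $d$. The case $d=1$ is trivial. For $d>1$, the numerical range $W(M):=\{\pk{v}{M}{v}:\|v\|=1\}$ is convex by the Toeplitz--Hausdorff theorem and contains every eigenvalue of $M$; hence it contains the average of the eigenvalues, which is $Tr(M)/d=0$. Pick a unit vector $v_1$ with $\pk{v_1}{M}{v_1}=0$ and extend it to an ONB. In this basis the $(1,1)$-entry of $M$ is $0$, and because $Tr(M)=0+Tr(M')$ the trailing $(d-1)\times(d-1)$ principal submatrix $M'$ is again traceless; applying the inductive hypothesis to $M'$ produces a unitary of the form $1\oplus U'$ that kills the remaining diagonal, and composing the two unitaries gives the lemma.

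Assembling everything: in the ONB supplied by the lemma every generator of $\T$, hence every element of $\T$, has zero diagonal, so the $d$-dimensional space of Hermitian matrices diagonal in that basis is orthogonal to $\T$ and is therefore a MAS inside $\Tb$. The only ingredient here that is not pure bookkeeping is the lemma, and within it the single nontrivial input is the Toeplitz--Hausdorff convexity of the numerical range, which is exactly what forces $0\in W(M)$; the block reduction and the check that $M'$ remains traceless are routine. So I expect the convexity step to be where the proof really happens, with everything around it straightforward.
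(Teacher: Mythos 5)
Your argument is correct, and it takes a genuinely different route from the paper. The paper proves the same underlying statement (its Proposition $P(d)$: two Hermitian matrices can be simultaneously conjugated so that both diagonals are multiples of $\mathbb{1}_d$) by an explicit induction: it takes $P(2)$ from Walgate et al as the base case, embeds the inductive unitary, and then zeroes the last diagonal entry by hand using a phase matrix $D_u$, an $SO(2)$ rotation, and an intermediate-value argument, invoking $P(d)$ a second time to finish. You instead package the two traceless Hermitian generators of $\T$ into the single traceless complex matrix $M=T_1+iT_2$ and reduce everything to the classical fact that a traceless matrix is unitarily similar to one with zero diagonal, proved by induction with the Toeplitz--Hausdorff convexity of the numerical range supplying the vector $v_1$ with $\pk{v_1}{M}{v_1}=0$ (note that $0=\mathrm{Tr}(M)/d$ is a convex combination of eigenvalues, each of which lies in $W(M)$, and the trailing block stays traceless, so the induction closes). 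The surrounding bookkeeping is also sound: $dim\,\Tb\ge d^2-2$ is equivalent to $dim\,\T\le 2$, the elements of $\T$ are traceless Hermitian, $\pk{v}{M}{v}=0$ is equivalent to both real quadratic forms vanishing, and zero diagonal of all of $\T$ in some ONB is exactly the condition that the MAS of matrices diagonal in that ONB sits inside $\Tb$ (Hilbert--Schmidt orthogonality). What your route buys is economy and generality: no special base case borrowed from Walgate et al, no explicit trigonometric construction, and a single appeal to a standard convexity theorem; what the paper's route buys is a self-contained, constructive recipe for the unitary (useful if one wants to exhibit the measurement basis explicitly), at the cost of a considerably more involved induction step.
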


\begin{proof}
This proof is by induction. Assume that $dim \Tb = d^2 -2$. This implies that  $dim \T =2 $. Let $A$ and $H$ be two linearly independent $d \times d$ matrices in $\T$. \emph{Proposition $P(d)$: For any two $d \times d$ hermitian matrices $H$ and $A$, there exists a $d \times d$ unitary $U$, so that the diagonals of $U^\dag H U$ and $U^\dag A U$ are multiples of $\mathbb{1}_{d}$.} It's known that $P(2)$ is true \cite{W00}. The goal is to prove that $P(d+1)$ is true assuming that $P(d)$ is true. Let $H$ and $A$ be two $d+1 \times d+1$ traceless hermitian matrices. Let $H_d$ and $A_d$ be their $d \times d$ upper diagonal block matrices. Since $P(d)$ is true, there is a $d \times d$ unitary $V_d$, so that diagonals of $V_d^\dag H_d V_d$ and $V_d^\dag A_d V_d$ are multiples of $\mathbb{1}_d$. Embedd $V_d$ as the $d \times d$ upper diagonal block of a $d+1 \times d+1$ unitary $V$ whose $d+1$-th diagonal element is $1$. Then it is easy to see that the diagonals of the $d \times d$ upper diagonal block of $V^\dag H 
V$ and $V^\dag A V$ are scalar multiples of $\mathbb{1}_d$. Since $V^\dag H V$ and $V^\dag A V$ are traceless, their diagonals are scalar multiples of matrix $D_\lambda \equiv \frac{1}{\sqrt{d(d+1)}} Diag(1,1,\cdots,1,-d)$, which is traceless. Let $V^\dag H V$ and $V^\dag A V$ have components $\alpha$ and $\beta$ $\in$ $\mathbb{R}$ along $D_\lambda$. Then $A'\equiv \frac{1}{\sqrt{\alpha^2 + \beta^2}}(-\beta V^\dag H V + \alpha V^\dag A V) $ has a zero diagonal, and component of $D_\lambda$ along $H'\equiv \frac{1}{\sqrt{\alpha^2 + \beta^2}}(\alpha V^\dag H V + \beta V^\dag A V) $ is $1$. Let the $(d,d+1)$-th matrix element of $A'$ be $ae^{-i \phi}$. Define $D_u \equiv Diag(1,1,\cdots,1,e^{\frac{-i(\pi+2\phi)}{4}},e^{\frac{i(\pi+2\phi)}{4}})$, then the $2 \times 2$ lower diagonal block of $A'' \equiv D_u^\dag A' D_u $ is a scalar multiple of $\sigma_y$. The diagonal of $H'' \equiv D_u^\dag H' D_u$ remains invariant. Let the real part of the $(d,d+1)$-th matrix element of $H''$ be $h$. Using an $SO(2)$ 
transformation, rotate between the $d$-th and $d+1$-th matrix elements of $H''$ to obtain $H'''$, while keeping all other elements fixed. $A''$ will remain invariant. Thus the real part of the $2 \times 2$ lower diagonal block of $H'''$ will undergo the transformation\small
\begin{align*}
\begin{pmatrix}
1 & h \\
h & -d
  \end{pmatrix}    
 \longrightarrow 
 \begin{pmatrix}
cos\frac{\theta}{2} & sin\frac{\theta}{2}\\
-sin\frac{\theta}{2} & cos\frac{\theta}{2}
\end{pmatrix} 
\begin{pmatrix}
1 & h \\
h & -d
\end{pmatrix}    
\begin{pmatrix}
cos\frac{\theta}{2} & -sin\frac{\theta}{2}\\
sin\frac{\theta}{2} & cos\frac{\theta}{2}
\end{pmatrix} \\
 = \begin{pmatrix}
         \frac{1-d}{2} + \frac{1+d}{2}  cos\theta + h sin\theta & h cos\theta - \frac{1+d}{2} sin\theta \\ 
         h cos\theta - \frac{1+d}{2} sin\theta & \frac{1-d}{2} - \frac{1+d}{2} cos\theta - h sin\theta
        \end{pmatrix}
\end{align*}\normalsize
I want to solve for $\theta$ in the equation: $\frac{1-d}{2} - \frac{1+d}{2} cos\theta - h sin\theta=0$. When $\theta = 0$, the LHS is $-d$ and when $\theta = \pi$, the LHS is $1$. Since the LHS is a continuous function of $\theta$, there must be some $ \theta \in (0,  \pi)$ for which the LHS is zero. Choose $\theta$ to be this value. Then $H'''$ and $A''$ are matrices whose $d+1$-th diagonal elements are both zero. Using $P(d)$ on the $d \times d$ upper diagonal blocks of $H'''$ and $A''$, $H'''$ and $A''$ can be rotated to obtain corresponding matrices whose diagonals are zero and which span the correspondingly rotated $\mathcal{T}$. Then the correspondingly rotated $\mathcal{T}_\bot$ contains all diagonal matrices which span a MAS.
\end{proof}

\section{Almost All Sets of $d$ Orthogonal Bipartite Pure States in $\h$ correspond to the case $dim \Tb = d$}
\label{app2}
This proof is similar to the Cohen's proof of theorem 1 in \cite{C08}. 

Denote $\G{n,d}$ as the manifold of all sets of $n$ orthogonal bipartite pure states $\{ \ket{\psi_i}_{\s{AB}}\}_{i=1}^{n}$ $\subset$ $\h$, where $\braket{\psi_i}{\psi_{i'}} = \delta_{ii'},$ $\forall$ $1 \leq i < i' \leq n$. Hence every point in $\G{n,d}$ is associated with a set of $d \times d$ orthonormal complex matrices $\{ W_i\}_{i=1}^{n}$ (see equation (2) in main text), i.e.,  $Tr(W_i^\dag W_{i'})=\delta_{ii'},$ $\forall$ $1 \leq i < i' \leq n$. Let's represent the rows of $W_i$ as $\un{w}_{\s{i1}}$, $\un{w}_{\s{i2}}$, $\cdots$, $\un{w}_{\s{id}}$. Vectorize the $W_i$ matrices by arranging these rows $\{\un{w}_{\s{ij}}\}_{j=1}^d$ as complex $d^2$-tuples, i.e., $(\un{w}_{\s{i1}}, \un{w}_{\s{i2}},\cdots,\un{w}_{\s{id}})$ $\in$ $\mathbb{C}^{d^2}$, and arrange these vectorized $W_i$'s as the first upper $n$ rows of a $d^2 \times d^2$ unitary matrix $U$, whose remaining rows are arbitary (insofar as the matrix remains unitary). Hence any point of $\G{n,d}$ can be associated with the first upper $n$ 
columns of a $d^2 \times d^2$ unitary matrix $U$ $\in$ $U(d^2)$. In fact, since the overall phases of these $n$ columns, the permutation of the order of their appearance in the set of first $d$ columns of $U$ and the rest of the $d^2 -n$ columns in $U$ are insignificant to describe the corresponding set of orthogonal pure states from $\h$, the manifold $\G{n,d}$ is given by $U(d^2)/(U(1)^{\times n} \times S_n \times U(d^2-n))$. This is a real manifold.

Let $\ma{d^2}$ be the space of all $d^2 \times d^2$ hermitian matrices, then it is the space of generators for $d^2 \times d^2$ unitary matrices, i.e., if $G \in \mathfrak{u}(d^2)$, then $e^{\s{-iG}}$ is a $d^2 \times d^2$ unitary matrix. Associate the ordered set of the first $n$ rows of $e^{\s{-iG}}$ with the set of $n$ vectorized $W_i$'s. Then the set $\{ W_i \}_{i=1}^{n}$ corresponds to some set of $n$ orthonormal states $\{ \ket{\psi_i}_{\s{AB}} \}_{i=1}^{n}$. This maps any $G$ $\in$ $\mathfrak{u}(d^2)$ to a point in $\G{n,d}$ unambiguously. Let's denote this map by $\mathscr{R}:$ $\mathfrak{u}(d^2)$ $\longrightarrow$ $\G{n,d}$. So $\mathscr{R}(G)$ is a point in $\G{n,d}$ corresponding to $\{ \ket{\psi_i}_{\s{AB}} \}_{i=1}^{n}$. In the following I specify norm-induced-metric for various spaces.
\begin{enumerate}                                                                                                                                                                                                                                                                                                                                                                                                                                                                                                                                                                                                         
\item Metric for all $d^2 \times d^2$ matrices is given by the standard Hilbert Schmidt norm.
\item Let $\{A_i\}_{i=1}^{n}$ be an arbitary set of $n$ complex $d \times d$ matrices, then $|| \{ A_i \}_{i=1}^{n}||$ $=$ $\left(\sum_{i=1}^{n} Tr(A_i^\dag A_i )\right)^\frac{1}{2}$.
\item Let $\{ \ket{\eta_i}_{\s{AB}} \}_{i=1}^{n}$ be a set of $n$ arbitrary vectors in $\h$, then $|| \{\ket{\eta_i}_{\s{AB}}\}_{i=1}^{n}||$ $=$ $(\sum_{i=1}^{n}$ $\prescript{}{\s{AB}}{\braket{\eta_i}{\eta_i}_{\s{AB}}})^\frac{1}{2}$.
\end{enumerate}
Then $G$ $\longrightarrow$ $e^{\s{-i}G}$ is continuous, $e^{\s{-i}G}$ $\longrightarrow$ $\{W_i\}_{i=1}^{n}$ is continuous and $\{ W_i \}_{i=1}^{n}$ $\longrightarrow$ $\{ \ket{\psi_i}_{\s{AB}} \}_{i=1}^{n}$ is continuous. This implies that $\mathscr{R}$ is continuous. It is easy to see that $\mathscr{R}$ is onto but not one-to-one.

For any set of $n$ orthonormal states $\{ \ket{\psi_i}_{\s{AB}} \}_{i=1}^{n}$, one can obtain the $d(d-1)$ matrices $\{H_\mathbf{i}, A_\mathbf{i} \}_{\mathbf{i} \in \mathcal{I}}$. Vectorize each of these matrices and arrange them as rows of a $n(n-1) \times d^2$ matrix $M$. Define $\mathscr{D}: \G{n,d} \longrightarrow \mathbb{R}$ by $\mathscr{D}(\{\ket{\psi_i}_{\s{AB}}\}_{i=1}^{n})$ $\equiv Det(MM^\dag)$. The goal is to establish that for no point in $\G{n,d}$ is there an open neighbourhood $\mathcal{N}$ containing said point such that $\mathscr{D}$ vanishes entirely in $\mathcal{N}$. Since $\mathscr{D}$ is continuous on $\G{n,d}$ and $\mathscr{R}$ is continuous on $\ma{d^2}$, $\mathscr{D} \circ \mathscr{R} $ is continuous on $\ma{d^2}$. Hence, if $\mathscr{D}$ vanishes entirely in some open neighbourhood $\mathcal{N}$ of $\{ \ket{\psi_i}_{\s{AB}} \}_{i=1}^{n}$ in $\G{n,d}$, and if $\mathscr{R}(G)$ $=$ $\{ \ket{\psi_i}_{\s{AB}} \}_{i=1}^{n}$, then there is some open neighrboorhood $\mathfrak{n}$ of $G \in \; 
\ma{d^2} $ where $\mathscr{D} \circ \mathscr{R}$ vanishes entirely too. Hence one needs to show that $\mathscr{D} \circ \mathscr{R}$ doesn't vanish entirely in any open neighrbourhood of any point $G$ in $\ma{d^2}$.

Let $\{ \lambda_i \}_{i=1}^{d^2}$ be an ONB for $\ma{d^2}$. Let $G = \un{\alpha}.\un{\lambda}$ be a point in $\ma{d^2}$ which has an open neighbourhood $\mathfrak{n}$ in which $\mathscr{D}\circ\mathscr{R}$ vanishes entirely. Then there exists some $\epsilon_{\s{s}} \in \mathbb{R}$ be such that $(\un{\alpha} +  \epsilon_s\hat{n}).\un{\lambda}$ $\in$ $\mathfrak{n}$ for all unit vectors $\hat{n}$ lying on $S^{\s{d^2-1}}$. 
\begin{widetext}
Then \begin{equation}
\label{exp}
\begin{split}
e^{-i (\un{\alpha} + \epsilon\hat{n}).\un{\lambda}} = & e^{-i\un{\alpha}.\un{\lambda}} \\ 
+ & \epsilon \; ( -i\hat{n}.\un{\lambda} - \frac{(\hat{n}.\un{\lambda})(\un{\alpha}.\un{\lambda})+(\un{\alpha}.\un{\lambda})(\hat{n}.\un{\lambda})}{2!} + i \frac{(\un{\alpha}.\un{\lambda})^2(\hat{n}.\un{\lambda})+(\un{\alpha}.\un{\lambda})(\hat{n}.\un{\lambda})(\un{\alpha}.\un{\lambda}) + (\hat{n}.\un{\lambda})(\un{\alpha}.\un{\lambda})^2}{3!} + \cdots)\\
+ & \epsilon^{\s{2}} \; ( - \frac{(\hat{n}.\un{\lambda})^2}{2} + i  \frac{(\hat{n}.\un{\lambda})^2(\un{\alpha}.\un{\lambda})+(\hat{n}.\un{\lambda})(\un{\alpha}.\un{\lambda})(\hat{n}.\un{\lambda}) + (\un{\alpha}.\un{\lambda})(\hat{n}.\un{\lambda})^2}{3!} + \cdots) \\
+ & \mathcal{O}( \epsilon^{\s{3}}).
\end{split}                                                                                                                                                                                                                                                                                                                                                                                                                                                                                                                                                                                                                          \end{equation}
\end{widetext} 

Hence it is easy to see that as $G$ $\longrightarrow$ $G + \epsilon\hat{n}.\un{\lambda}$, the $W_i$ matrices transform as $W_i$ $\longrightarrow$ $ W_i + \epsilon W_i^{\s{(1)}}(\hat{n}) + \epsilon^{\s{2}} W_i^{\s{(2)}}(\hat{n}) + \mathcal{O}( \epsilon^{\s{3}})$, where $ \epsilon W_i^{\s{(1)}}(\hat{n})$ is the first order change in $ \epsilon$, $ \epsilon^{\s{2}} W_i^{\s{(2)}}(\hat{n})$ is the second order change in $ \epsilon$ and so on. Since equation \eqref{exp} gives the Taylor series expansion of $e^{\s{-i (\un{\alpha} + \epsilon\hat{n}).\un{\lambda}}}$ about $\epsilon=0$, $ W_i + \sum_{k=1}^{\infty} \epsilon^{\s{k}} W_i^{\s{(k)}}(\hat{n})$ is the Taylor series expansion of about $\epsilon=0$. In fact the radius of convergence for the latter is determined by the former, and since the expression in \eqref{exp} converges for all $\epsilon \in \mathbb{R}$ for the former, it does so too for the latter. Now $\mathscr{D}(\{ \ket{\psi_i}_{\s{AB}}\}_{i=1}^{n}) \equiv Det(MM^\dag)$ is a polynomial of the matrix 
elements of $W_i$. So when $W_i$ goes to $W_i + \sum_{k=1}^{\infty} \epsilon^{\s{k}} W_i^{\s{(k)}}(\hat{n})$,  $(\mathscr{D}\circ\mathscr{R})(G)$ $\longrightarrow$ $(\mathscr{D}\circ \mathscr{R})(G)$ $+$ $ \epsilon (\mathscr{D}\circ\mathscr{R})^{\s{(1)}}(\hat{n})$ $+$ $ \epsilon^{\s{2}} (\mathscr{D}\circ \mathscr{R})^{\s{(2)}}(\hat{n})$ $+$ $\mathcal{O}( \epsilon^{\s{3}})$, where $ \epsilon (\mathscr{D}\circ \mathscr{R})^{\s{(1)}}(\hat{n})$ is the first order change in $ \epsilon$, $ \epsilon^{\s{2}} (\mathscr{D}\circ \mathscr{R})^{\s{(2)}}(\hat{n})$ is the second order change in $ \epsilon$ and so on. Note that $(\mathscr{D}\circ\mathscr{R})(G)$ $+$ $\sum_{k=1}^{\infty} \epsilon^{\s{k}} (\mathscr{D}\circ \mathscr{R})^{\s{(k)}}(\hat{n})$ is the Taylor series of $\mathscr{D}\circ \mathscr{R}$ about $G$ in the direction $\hat{n}$. Since the Taylor series  $ W_i + \sum_{k=1}^{\infty} \epsilon^{\s{k}} W_i^{\s{(k)}}(\hat{n})$ convergences for all $\epsilon \in \mathbb{R}$, and since $\mathscr{D}$ is a polynomial 
in the matrix elements of $W_i$, the radius of convergence for the Taylor expansion $(\mathscr{D}\circ\mathscr{R})(G)$ $+$ $\sum_{k=1}^{\infty} \epsilon^{\s{k}} (\mathscr{D}\circ \mathscr{R})^{\s{(k)}}(\hat{n})$ is $\epsilon = \infty$. 

Now let $\mathscr{D}\circ \mathscr{R}$ vanish in $\mathfrak{n}$. This implies that $(\mathscr{D}\circ\mathscr{R}) (G + \epsilon \hat{n})=0$, for all $\hat{n} \in S^{\s{d^2-1}}$ and $\epsilon \in [0,\epsilon_s]$, where $\epsilon_s$ was chosen so that $(\un{\alpha} +  \epsilon_s\hat{n}).\un{\lambda}$ $\in$ $\mathfrak{n}$. The Taylor series of $\mathscr{D}\circ\mathscr{R}$ about $G$ is a summation of monomials in $\epsilon$, i.e., $(\mathscr{D}\circ\mathscr{R})^{(k)}\epsilon^{\s{k}}$, which are linearly independent in the range $\epsilon \in [0,\epsilon_{\s{s}}]$. Hence the only way that such a summation vanishes for all $\epsilon \in [0,\epsilon_{\s{s}}]$ is if $\mathscr{D}^{\s{(k)}}(\hat{n}) = 0$ for all $k \in \mathbb{N}$ and $\hat{n} \in S^{\s{d^2-1}}$, and if $(\mathscr{D}\circ\mathscr{R})(G)=0$. But note that the radius of convergence for $\epsilon$ in this Taylor series is $\infty$. Hence  $\mathscr{D}\circ \mathscr{R}$ vanishes all over $\ma{d^2}$. And that implies that $\mathscr{D}$ vanishes all over 
$\G{n,d}$. The  following counter-example will disprove this: let $ \ket{\psi_i}_{\s{AB}} \equiv \ket{s_i}_{\s{A}} \ket{0}_{\s{B}}$, where $\ket{0}_{\s{B}}\in \mathcal{H}_B$. Then $ Tr_B ( \ketbra{\psi_{i}}{\psi_{i'}} ) =  \ketbra{s_{i}}{s_{i'}}$ when $i \neq i'$, so $\T$ is spanned by the complex congugate of matrices representing $\frac{1}{2} \left( \ketbra{s_{i}}{s_{i'}} + \ketbra{s_{i'}}{s_{i}} \right)$ and $\frac{1}{2i} \left( \ketbra{s_{i}}{s_{i'}} - \ketbra{s_{i'}}{s_{i}} \right)$, for all $1 \leq i < i' \leq d$,  in the standard basis. All these matrices are linearly independent, so $dim \Tb = d$ and $\mathscr{D}(\{ \ket{s_{i}}_{\s{A}} \ket{0}_{\s{B}} \}_{i=1}^{n})$ $\neq$ $0$. Hence it is not possible for $\mathscr{D}$ to vanish entirely in any open neighbourhood of any point in $\G{n,d}$. This also holds true for the particular case when $n=d$. 

\bibliographystyle{apsrev4-1}
\bibliography{mybib}
\end{document}